\theoremstyle{definition}
\newcounter{counter}
\newtheorem{theorem}[counter]{Theorem}
\newtheorem{definition}[counter]{Definition}
\newtheorem{lemma}[counter]{Lemma}
\newtheorem{corollary}[counter]{Corollary}
\newtheorem*{theorem*}{Theorem}
\newtheorem*{lemma*}{Lemma}
\newcommand{\Qstoch}{\text{\textbf{QStoch}}\xspace}
\newcommand{\CPM}{\text{\textbf{CPTP}}\xspace}
\title{Quantum Theory is a Quasi-stochastic Process Theory}
\author{John van de Wetering
\institute{Radboud University\\Nijmegen, Netherlands}
\email{wetering@cs.ru.nl}
}
\date{November 2016}
\begin{document}

\maketitle

\begin{abstract}
There is a long history of representing a quantum state using a quasi-probability distribution: a distribution allowing negative values. In this paper we extend such representations to deal with quantum channels. The result is a convex, strongly monoidal, functorial embedding of the category of trace preserving completely positive maps into the category of quasi-stochastic matrices. This establishes quantum theory as a subcategory of quasi-stochastic processes. Such an embedding is induced by a choice of minimal informationally complete POVM's. We show that any two such embeddings are naturally isomorphic. The embedding preserves the dagger structure of the categories if and only if the POVM's are symmetric, giving a new use of SIC-POVM's, objects that are of foundational interest in the QBism community. We also study general convex embeddings of quantum theory and prove a dichotomy that such an embedding is either trivial or faithful.
\end{abstract}

\section*{Introduction}
In Feynman's famous 1981 paper on quantum computation \cite{feynman1982simulating} he writes ``The only difference between a probabilistic classical world and the equations of the quantum world is that somehow or other it appears as if the probabilities would have to go negative''. In this paper we wish to make this statement exact.

Of course, much work has already been done in this regard going all the way back to the Wigner quasi-probability distribution in 1932 \cite{wigner1932quantum}. The Wigner function allows you to associate a probability distribution over a phase space for a quantum particle, with the only caveat that the probability sometimes has to be negative. The negativity appearing in probabilistic representations of quantum systems is something that lies at the heart of quantum theory: Spekkens has shown \cite{spekkens2008negativity} that the necessity of negativity in probabilistic representations is equivalent to the contextuality of quantum theory. It is also a necessity for a quantum speedup as states represented positively by the Wigner function can be efficiently simulated \cite{mari2012positive,pashayan2015estimating,veitch2012negative,veitch2013efficient}. An operational interpretation of negative probabilities is given by Abramsky and Brandenburger \cite{abramsky2014operational}.

The main contribution of this paper will be to represent \emph{all} of (finite-dimensional) quantum theory as a set of quasi-stochastic processes, not just the states. In particular we will use the language of category theory to establish that the category of quantum processes is a subcategory of the category of quasi-stochastic processes. A central object for studying these representations is the \emph{informationally complete POVM} (IC-POVM), this is a measurement that completely characterizes a quantum state. We are particularly interested in IC-POVMs that are minimal as these form a basis of the statespace. A particular type of such a measurement is a symmetric IC-POVM: this is a very special type of POVM that is of particular interest to the QBism community \cite{fuchs2017notwithstanding} as it allows the updating of states to be written in a particularly elegant manner. It also allows quantum states to be written down with a minimal amount of negativity \cite{zhu2016quasiprobability}. Minimal and symmetric POVMs turn out to be essential to preserving respectively the tensor product and adjoint in the quasi-stochastic representations considered in this paper. 

Representing quantum processes by quasi-stochastic matrices is not a new idea. In particular it is used in \cite{appleby2017qplex} to argue the similarity of a unitary process and the Born rule, although they stop short of extending the rule to all quantum channels and of composing them. In \cite{pashayan2015estimating} they also don't go into detail about the compositional nature of these representations either. As far as the author is aware, this paper is the first to consider the compositional structure of quasi-stochastic representations of quantum theory. An approach that comes close is that of the \emph{duotensor} framework of Hardy \cite{hardy2013formalism}, in particular his \emph{hopping metric} is similar to the transition matrices $T$ in this paper, but because Hardy's \emph{fiducial effects} don't have to form a POVM the values in the hopping metric don't form a quasi-probability distribution. In \cite{appleby2017qplex,ferrie2008frame} it was shown that the negativity in the representations can be overcome by modifying the way probabilities are calculated. In this paper this modification takes the form of a category that has a modified composition rule.

We will represent quantum theory using a category \CPM close to that of Selinger's \cite{selinger2007dagger} consisting of trace preserving completely positive maps. This category models the dynamics of finite dimensional quantum systems where throwing away systems is allowed. It excludes classical systems which are, for instance, present in the category of finite dimensional C$^{*}$-algebras. Most of the results carry over to this setting, a small example of which we will give at the end of section \ref{sec:functor}. The category consisting of quasi-stochastic matrices that compose via matrix multiplication will be denoted \Qstoch. The primary contribution of the paper is the following theorem.
\begin{theorem*}
    Any family of minimal informationally complete (IC) POVM's gives rise to a functor $F:\CPM \rightarrow \Qstoch$. This functor is faithful (injective) and strong monoidal (preserving tensor product), it furthermore preserves the convex structure of $\CPM$. The functors arising from two different families of minimal IC-POVM's are naturally isomorphic. The functor preserves the adjoint of unital channels if and only if the POVM's are generalised symmetric informationally complete (SIC).
\end{theorem*}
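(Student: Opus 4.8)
The plan is to fix, for every finite-dimensional system $\mathcal{H}$ of dimension $d$, the chosen minimal IC-POVM $\{E_i\}_{i=1}^{d^2}$ together with its \emph{dual basis} $\{D_i\}$, the unique operator basis with $\mathrm{Tr}(E_iD_j)=\delta_{ij}$. The two reconstruction identities $A=\sum_i\mathrm{Tr}(E_iA)\,D_i=\sum_i\mathrm{Tr}(D_iA)\,E_i$ (valid for all self-adjoint $A$, being the two expansions in dual bases) and the consequence $\mathrm{Tr}(D_i)=1$ for all $i$ (apply the second identity to $A=\mathbb{1}$ and compare coefficients with $\mathbb{1}=\sum_iE_i$) will be used throughout. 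Define $F$ to send a $d$-dimensional system to the object $d^2$ of $\Qstoch$, and a trace-preserving completely positive map $\Phi$ from $\mathcal{B}(\mathcal{H}_A)$ to $\mathcal{B}(\mathcal{H}_B)$ to the matrix $F(\Phi)_{ji}:=\mathrm{Tr}\big(E^B_j\,\Phi(D^A_i)\big)$. First one checks $F(\Phi)$ is quasi-stochastic: its column sums are $\sum_j\mathrm{Tr}(E^B_j\Phi(D^A_i))=\mathrm{Tr}(\Phi(D^A_i))=\mathrm{Tr}(D^A_i)=1$ by trace preservation. Functoriality is immediate on identities ($\mathrm{Tr}(E_jD_i)=\delta_{ij}$) and on composites follows by inserting $\Phi(D^A_i)=\sum_j F(\Phi)_{ji}\,D^B_j$ before applying the second map; faithfulness follows because $\{E^B_j\}$ and $\{D^A_i\}$ span, so $F(\Phi)=F(\Psi)$ forces $\Phi(D^A_i)=\Psi(D^A_i)$ for all $i$, hence $\Phi=\Psi$; preservation of the convex structure is just linearity of $\Phi\mapsto F(\Phi)$.

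For monoidality, note $(d_Ad_B)^2=d_A^2d_B^2$, so $F$ is compatible on objects with the Kronecker tensor of $\Qstoch$, and the unique POVM on the trivial system sends the monoidal unit to the monoidal unit. The structure isomorphism $\phi_{A,B}\colon F(\mathcal{H}_A)\otimes F(\mathcal{H}_B)\to F(\mathcal{H}_A\otimes\mathcal{H}_B)$ is taken to be the change-of-basis matrix from the product POVM $\{E^A_i\otimes E^B_j\}$ to the chosen POVM $\{E^{AB}_k\}$, namely $(\phi_{A,B})_{k,(i,j)}=\mathrm{Tr}\big(E^{AB}_k\,(D^A_i\otimes D^B_j)\big)$, with inverse $\mathrm{Tr}\big((E^A_i\otimes E^B_j)\,D^{AB}_k\big)$; both are quasi-stochastic by the column-sum computation above (using $\mathrm{Tr}(D^A_i\otimes D^B_j)=1$ and $\mathrm{Tr}(D^{AB}_k)=1$), and naturality of $\phi$ in each argument together with the pentagon and triangle coherence reduce to associativity of the operator tensor product plus the reconstruction identities. (Choosing the POVM family multiplicatively makes $\phi$ the identity and $F$ strictly monoidal.) The same device handles uniqueness up to natural isomorphism: for two families $\{E_i\},\{E'_i\}$ on $\mathcal{H}$, the matrix $(\eta_{\mathcal{H}})_{ij}=\mathrm{Tr}(E'_i D_j)$ is quasi-stochastic with inverse $(\eta^{-1}_{\mathcal{H}})_{ij}=\mathrm{Tr}(E_iD'_j)$, and naturality $\eta_{\mathcal{H}_B}F(\Phi)=F'(\Phi)\eta_{\mathcal{H}_A}$ again follows by inserting a reconstruction; compatibility with $\phi$ is routine, giving a monoidal natural isomorphism $F\Rightarrow F'$.

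For the adjoint statement, restrict to unital channels $\Phi$, whose Hilbert--Schmidt adjoint $\Phi^\dagger$ is again unital and trace preserving, hence lies in $\CPM$; the transpose is the corresponding operation on $\Qstoch$. Using $\mathrm{Tr}(E_j\Phi^\dagger(D_i))=\mathrm{Tr}(\Phi(E_j)D_i)$ one computes $F(\Phi^\dagger)_{ji}=\mathrm{Tr}\big(D_i\,\Phi(E_j)\big)$ while $F(\Phi)^T_{ji}=F(\Phi)_{ij}=\mathrm{Tr}\big(E_i\,\Phi(D_j)\big)$. Introducing the invertible superoperator $N$ determined by $N(E_i)=D_i$, equality of these for all $i,j$ is, by spanning, equivalent to $\mathrm{Tr}(N(X)\Phi(Y))=\mathrm{Tr}(X\Phi(N(Y)))$ for all operators $X,Y$, i.e.\ to $N^\dagger\Phi=\Phi N$ as superoperators. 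Taking $\Phi=\mathrm{id}$ gives $N^\dagger=N$, so the condition becomes: $N$ commutes with every unital channel, in particular with every unitary conjugation. This is the crux. Since the adjoint representation of $U(d)$ on $\mathcal{B}(\mathcal{H})$ decomposes into the two inequivalent irreducible subspaces $\mathbb{C}\mathbb{1}$ and the traceless operators, Schur's lemma forces $N(Y)=\alpha Y+\gamma\,\mathrm{Tr}(Y)\mathbb{1}$; invertibility of $N$ gives $\alpha\neq0$ and $\alpha+\gamma d\neq0$, so $\mathrm{Tr}(D_i)=(\alpha+\gamma d)\mathrm{Tr}(E_i)=1$ forces $\mathrm{Tr}(E_i)$ constant and $D_i=\alpha E_i+\beta\mathbb{1}$ with $\alpha,\beta$ independent of $i$, which via $\delta_{ij}=\mathrm{Tr}(E_iD_j)$ is exactly $\mathrm{Tr}(E_iE_j)=\tfrac1\alpha\delta_{ij}+\mathrm{const}$: the defining property of a generalised SIC-POVM. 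Conversely, for a generalised SIC family $N$ has precisely this covariant form, which commutes with every unital trace-preserving map, so the displayed identity holds and $F$ preserves the adjoint. I expect recognising the ``commutes with all unital channels'' condition and extracting the generalised SIC property from it via this representation-theoretic step to be the main obstacle; the remaining parts are bookkeeping with dual bases.
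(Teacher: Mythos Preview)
Your argument is correct and constructs the same functor as the paper, but the packaging is genuinely different and in some respects cleaner. The paper factors through an auxiliary category $\Qstoch_T$ with modified composition $s*r=sT^{-1}r$, proving first that $Q:\CPM\to\Qstoch_T$ is a functor and then that $\Qstoch_T\cong\Qstoch$; your dual-basis formulation $F(\Phi)_{ji}=\mathrm{Tr}(E^B_j\Phi(D^A_i))$ collapses these two steps into one (indeed $D_k=\sum_j(T^{-1})_{jk}E_j/\mathrm{Tr}(E_j)$, so your $F$ equals the paper's $F_T\circ Q$). What the paper's detour buys is an explicit appearance of the transition matrix $T$ and hence a direct link to the ``modified probability calculus'' and urgleichung literature; what your route buys is that functoriality, monoidality, and the natural isomorphism between two POVM families all become one-line insertions of the reconstruction identity, with no auxiliary category to set up.

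The divergence is sharpest in the adjoint characterisation. The paper first shows $Q$ is a dagger functor whenever all $\mathrm{Tr}(E_i)$ are equal, then proves a matrix-level lemma that an $n^2\times n^2$ doubly quasi-stochastic $T$ commuting with all doubly quasi-stochastic matrices must be $\alpha I+\beta J$, and reads off the generalised SIC condition from that. You instead pass to the frame superoperator $N:E_i\mapsto D_i$, reduce dagger-preservation to $N$ commuting with all unital channels, and invoke Schur's lemma for the adjoint $U(d)$-action on $\mathcal{B}(\mathcal{H})$. These are the same commutant computation at two different levels, but your version is more tightly adapted to the problem: it uses only unitary conjugations (which are actual CPTP maps) to force the covariant form of $N$, then checks directly that this form commutes with every unital trace-preserving map, so both directions are handled without the slightly delicate step of arguing that the image of $Q$ is large enough inside the doubly quasi-stochastic matrices. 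Your deduction that $\mathrm{Tr}(E_i)$ is constant also falls out automatically from $\mathrm{Tr}(D_i)=1$, whereas the paper has to assume equal traces before its dagger lemma even applies.
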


Note that the fact that any two representations by minimal POVM's are naturally isomorphic in a way seems to answer the question why there doesn't seem to be a preferred way to represent quantum theory by probabilities: the category doesn't `see' this difference. 
This theorem establishes that \CPM can be considered a subcategory of \Qstoch. \CPM is not a \emph{full} subcategory: for a given system not all quasi-probability distributions correspond to valid quantum states. In particular, since no orthogonal IC-POVM exists, the probability distribution that is associated to a quantum state will always be somewhat mixed. This is qualitatively similar to the concept of Spekkens' \emph{epistricted theories} \cite{spekkens2016quasi}. As shown in that paper, many of the characteristic features of quantum theory (no-cloning, teleportation, dense coding, entanglement) arise in classical epistemically restricted theories: theories where states of maximal knowledge are not available. Some features of quantum theory however do \emph{not} occur in classical epistemically restricted theories, primarily Bell / noncontextuality inequality violations and a computational speed-up. As shown by Abramsky and Brandenburger \cite{abramsky2011sheaf}, any non-signalling correlations (including those that maximally violate the Bell inequality) can be represented by a hidden variable model if one allows negative probabilities. Furthermore it was shown that any tomographically local theory has a complexity bound of \textbf{AWPP} \cite{lee2015computation} which was later shown by the same authors to be achieved by a computational model based on a quasi-probabilistic Turing machine \cite{barrett2017localstoch}. These results together suggest that the features of quantum theory that don't occur in classical epistricted theories can be explained by the presence of negative probabilities in quantum theory and that in fact the necessity of this negativity is the `cause' of violating Bell inequalities and achieving a computational speed-up.

We also prove a converse statement to the above theorem. Defining a quasi-stochastic representation of quantum theory to be a convex functor $F:\CPM \rightarrow \Qstoch$ we show the following.
\begin{theorem*}
    To each quasi-stochastic representation we can associate a family of quasi-POVM's (POVM's that don't have to consist of positive components) that determine the representation on the states. Exactly one of the following holds for all quasi-stochastic representations.
    \begin{enumerate}
        \item All the components of the associated quasi-POVM's are multiples of the identity, in which case the representation is trivial.
        \item All the associated quasi-POVM's are informationally complete, in which case the representation is faithful.
    \end{enumerate}
    Furthermore, for nontrivial representations it holds that
    \begin{enumerate}
        \item the representation is strong monoidal if and only if the associated quasi-POVM's are \emph{minimal} informationally complete.
        \item the representation preserves the dagger if and only if the associated quasi-POVM's are \emph{symmetric} informationally complete.
    \end{enumerate}
\end{theorem*}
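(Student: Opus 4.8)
The plan is to read off from $F$ a family of quasi-POVM's, prove the trivial/faithful dichotomy from invariance under conjugation, and then reduce the strong-monoidal and dagger statements to the first theorem. First I would record what $F$ does to states. Each object $M_d$ has as states the density matrices, i.e.\ the morphisms $\mathbb C\to M_d$, and $F$ sends these affinely into the quasi-stochastic vectors on $F(M_d)$, say on $n_d$ points (using the convention $F(\mathbb C)=\mathbb C$, which is what lets us speak of the action on states as vectors at all). Since the density matrices affinely span the hyperplane of trace-one Hermitian operators, and every affine functional on that hyperplane has the form $\rho\mapsto\operatorname{tr}(E\rho)$ for a Hermitian $E$, we may write $F(\rho)_i=\operatorname{tr}(E^{(d)}_i\rho)$. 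As the output sums to one on every state, $\sum_iE^{(d)}_i=I$; the $E^{(d)}_i$ need not be positive, so $\{E^{(d)}_i\}$ is a quasi-POVM, and it clearly determines $F$ on states.

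Next comes the dichotomy. Put $V_d:=\operatorname{span}_{\mathbb R}\{E^{(d)}_i\}$, and note $I\in V_d$. Applying functoriality to the composite of a state with the unitary channel $\mathrm{Ad}_U$ gives $\operatorname{tr}((U^\dagger E^{(d)}_iU)\rho)=\sum_jF(\mathrm{Ad}_U)_{ij}\operatorname{tr}(E^{(d)}_j\rho)$ for all density matrices $\rho$; this determines $U^\dagger E^{(d)}_iU$ up to a multiple of $I$, and testing on $\rho=I/d$ matches the traces, so in fact $U^\dagger E^{(d)}_iU=\sum_jF(\mathrm{Ad}_U)_{ij}E^{(d)}_j$. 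Hence $V_d$ is invariant under conjugation by all unitaries. Since $\mathrm{Herm}(M_d)=\mathbb R I\oplus\{\text{traceless Hermitians}\}$ and the traceless part is an irreducible representation of $\mathrm{SU}(d)$ under conjugation, the only conjugation-invariant subspaces containing $I$ are $\mathbb R I$ and all of $\mathrm{Herm}(M_d)$; so for each $d$, either all $E^{(d)}_i$ are multiples of $I$, or $\{E^{(d)}_i\}$ is informationally complete. To globalise this I would glue with embedding/retraction channels: for $m\le d$ take an isometry $W\colon\mathbb C^m\hookrightarrow\mathbb C^d$ with a retraction channel $\Gamma\colon M_d\to M_m$, $\Gamma\circ\mathrm{Ad}_W=\mathrm{id}$, e.g.\ $\Gamma(\tau)=W^\dagger\tau W+\operatorname{tr}((I-WW^\dagger)\tau)\,\omega$; for $d\le m$ use $\mathrm{Ad}_V$ directly. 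Feeding the relations $F(\Gamma)F(\mathrm{Ad}_W)=\mathrm{id}$ and $F(\mathrm{Ad}_V)F(\rho)=F(\mathrm{Ad}_V(\rho))$ through the quasi-POVM's shows the ``multiple of $I$'' alternative at one $d\ge2$ forces it at every dimension (using that an operator scalar on every $2$-dimensional subspace is scalar). So exactly one alternative holds uniformly: the first makes $F$ constant on states (trivial), and in the second, faithfulness is immediate, since $F(\Phi)=F(\Psi)$ implies $\operatorname{tr}(E^{(d')}_i\Phi\rho)=\operatorname{tr}(E^{(d')}_i\Psi\rho)$ for all $i$ and all states $\rho$, whence $\Phi=\Psi$ by informational completeness of the target quasi-POVM together with the fact that density matrices span.

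For the monoidal and dagger refinements, assume the nontrivial case. If every $\{E^{(d)}_i\}$ is \emph{minimal} then $B_d\colon\rho\mapsto(\operatorname{tr}(E^{(d)}_i\rho))_i$ is a linear isomorphism $\mathrm{Herm}(M_d)\xrightarrow{\sim}\mathbb R^{d^2}$, and the forced identity $F(\Phi)\,B_d(\rho)=B_{d'}(\Phi\rho)$ on states — extended by linearity, using that the hyperplane $\{\sum v_i=1\}$ misses the origin and hence linearly spans $\mathbb R^{d^2}$ — pins $F(\Phi)=B_{d'}\circ\Phi\circ B_d^{-1}$ on all of $\mathbb R^{d^2}$, reading $\Phi$ as the induced linear map on Hermitian matrices. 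Thus $F$ is literally one of the functors of the first theorem (with dual frame $D^{(d)}_i=B_d^{-1}(e_i)$), so strong monoidality, convexity, and ``preserves the dagger of unital channels iff the POVM's are generalised SIC'' all follow from that theorem. For the reverse directions one must exclude overcomplete quasi-POVM's: strong monoidality provides a natural isomorphism $\mu_{X,Y}\colon F(X)\otimes F(Y)\to F(X\otimes Y)$, and the coherence relation $F(\rho\otimes\sigma)=\mu_{M_d,M_{d'}}\bigl(F(\rho)\otimes F(\sigma)\bigr)$ forces $G^{(dd')}_k=\sum_{ij}(\mu_{M_d,M_{d'}})_{k,ij}\,E^{(d)}_i\otimes E^{(d')}_j$ with $\mu$ invertible; combining this with the associativity coherence and the identity law $F(\mathrm{id})=\mathrm{id}$ (which already obstructs the naive overcomplete construction) should force $n_d=d^2$, and likewise dagger-preservation makes each $F(\mathrm{Ad}_U)$ orthogonal, which together with the above yields the generalised-SIC condition.

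The step I expect to fight with is the dichotomy: the clean part is recognising $\operatorname{span}\{E^{(d)}_i\}$ as conjugation-invariant — which hinges on the slightly delicate point that matching traces removes the $\mathbb R I$-ambiguity — and invoking irreducibility of the adjoint representation, but the cross-dimensional bookkeeping that upgrades the per-object dichotomy to a uniform one, and the analysis of $\mu$ that rules out overcomplete quasi-POVM's in the last step, are where the real care goes; by contrast the extraction of the quasi-POVM's is routine, and the minimal-case analysis is a short reduction to the first theorem.
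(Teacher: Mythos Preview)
Your proposal is correct and follows the same overall architecture as the paper---extract quasi-POVM's from convexity, prove the trivial/faithful dichotomy via unitary invariance, then reduce the monoidal and dagger claims to the earlier construction theorems---but the technical core of the dichotomy is genuinely different.

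For the per-dimension dichotomy, the paper works with $V^\perp$ rather than $V$: it observes that if $F$ is not faithful on $M_n$ there is a nonzero $\rho_2\in V^\perp$, and functoriality forces $V^\perp$ to be closed under all CPTP maps, in particular unitary conjugation; it then proves an elementary lemma that if a subspace $L\subset M_n$ contains $I$ and a Hermitian with two distinct eigenvalues, then $\operatorname{span}\bigl(\bigcup_U ULU^\dagger\bigr)=M_n$, by explicitly building all diagonal matrices from permutation conjugates. You instead show directly that $V_d$ is conjugation-invariant via $U^\dagger E_iU=\sum_j F(\mathrm{Ad}_U)_{ij}E_j$, and then invoke irreducibility of the adjoint representation of $\mathrm{SU}(d)$ on traceless Hermitians. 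Your route is cleaner and more conceptual; the paper's is self-contained and avoids any representation theory. One small remark: your ``up to $\mathbb R I$'' worry is unnecessary, since density matrices already linearly (not just affinely) span the Hermitians, so the equality holds on the nose without the trace-matching step.

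The cross-dimensional globalisation is also organised differently. The paper argues: trivial at $n$ implies trivial at $m\ge n$ by mapping two orthogonal pure states in $M_n$ to arbitrary orthogonal pure states in $M_m$ via a partial isometry, then uses convexity; a surjection onto $M_2$ then handles $m<n$. Your retraction/embedding scheme together with the ``scalar on every 2-dimensional subspace implies scalar'' observation achieves the same thing.

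For the monoidal and dagger refinements, your reduction---showing that in the minimal case $F(\Phi)=B_{d'}\Phi B_d^{-1}$ is forced, hence $F$ coincides with the explicitly constructed functor---is a useful sharpening that the paper leaves implicit; the paper simply cites its Theorems on the explicit construction. The ``only if'' direction for strong monoidality is handled informally in both your proposal and the paper (the paper merely remarks that the coherence morphisms $S_{n_1,n_2}$ fail to be invertible in the non-minimal case), so your caution there is warranted rather than a gap.
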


This gives an interesting new way to look at minimal generalised SIC-POVM's: they are the only POVM's that lead to a quasi-stochastic representation of quantum theory that is both strong monoidal (preserving the tensor product) and that preserves the dagger (the adjoint). The fact that each representation comes from a family of quasi-POVM's mirrors a very similar statements about frames in \cite{ferrie2008frame}.

Any representation is either trivial or faithful and if it is strong monoidal it also has to be minimal. It might even be the case that along the lines of the proof of Theorem \ref{theor:trivialrep} that any nontrivial representation has to be minimal regardless, although we do not show this. If this is in fact the case then it is a strong argument that minimal (quasi-)IC-POVM's are an essential object in quantum theory: they would be the only objects inducing faithful quasi-stochastic representations.

Considering the above results it is a natural question to ask whether any non-signalling theory allows an embedding into the category of quasi-stochastic processes. This turns out to be true for any causal operational-probabilistic theory \cite{chiribella2011informational} which allows coarse-graining and is nondeterministic. This follows easily from the existence of minimal IC measurements in those theories and an adaptation of the proofs in this paper. Such a theory allows a strong monoidal embedding if and only if it allows local tomography. 

 We can also try to start with $\Qstoch$ and identify suitable properties of subcategories that make them isomorphic to $\CPM$. Work has already been done in this direction by Appleby, Fuchs, Schack and others \cite{appleby2017qplex,appleby2011properties,fuchs2013quantum}. They have restricted themselves to looking at state spaces. By considering the entire compositional structure of quantum theory, finding suitable axioms might be easier. In fact, using some simple axioms and the structure of $\Qstoch$ it is possible to derive the modified composition rule $q(j) = \sum_i(\alpha p(i) - \beta)r(j\lvert i)$ used in those papers. This will appear in later work.

We assume some basic familiarity with some concepts of category theory, namely that of a category, functor and natural transformation. All other categorical concepts will be explained when necessary. In section \ref{sec:prelim} we will establish the standard way of associating stochastic matrices and probability distributions to quantum channels and states as done in for instance \cite{ferrie2008frame,ferrie2011quasi}. We will extend this to a functor in section \ref{sec:functor}. The preservation of the tensor product and the adjoint will be studied respectively in sections \ref{sec:tensor} and \ref{sec:adjoint}. Finally, general representations of quantum theory are considered in section \ref{sec:genrep}. The appendix contains some of the longer proofs.

\noindent\textbf{Acknowledgements}:This work is supported by the ERC under the European Union’s Seventh Framework Programme (FP7/2007-2013) / ERC grant n$^\text{o}$ 320571. The author would like to thank the anonymous referees for their valuable comments and for pointing out additional references.

\section{Preliminaries} \label{sec:prelim}
\begin{definition}
Let $A\in M^{n\times m}(\mathbb{R})$ be a $n\times m$ matrix with real entries. It is called \emph{quasi-stochastic} when the values in each column sum up to 1. It is called \emph{stochastic} when it is quasi-stochastic and all the entries are positive. A matrix is called \emph{doubly (quasi-)stochastic} when it is (quasi-)stochastic and its transpose is also (quasi-)stochastic. A doubly quasi-stochastic matrix is necessarily square.
\end{definition}

Note that if a (doubly) quasi-stochastic matrix has an inverse, this inverse will also be (doubly) quasi-stochastic. The inverse of a stochastic matrix is not necessarily stochastic (that is, some negative components might occur).

\begin{definition}
Let $M_n = M^{n\times n}(\mathbb{C})$. We call $\rho \in M_n$ a \emph{state} when $\rho\geq 0$ and $\tr(\rho)=1$. We call $E\in M_n$ an \emph{effect} when $0\leq E\leq 1$. A set of effects $\{E_i\}\subseteq M_n$ is called a \emph{POVM} when $\sum_i E_i = I_n$ where $I_n$ denotes the identity. If the set of $E_i$ are Hermitian and satisfy $\sum_i E_i = I_n$ while no longer being necessarily positive, then we call this set a \emph{quasi-POVM}\footnote{In fact, a quasi-POVM is nothing more than a Hermitian basis where the elements happen to sum up to the identity, we use the term `quasi-POVM' to be consistent about the use of `quasi-' to refer to an object that is usually positive, but in this case not.}. If a (quasi-)POVM spans $M_n$ we call it \emph{Informationally Complete} (IC) and if the elements of a IC-POVM are also linearly independent then it forms a basis for $M_n$ and we call it \emph{minimal} informationally complete. Such a set always has $n^2$ elements.
\end{definition}
Measuring a state $\rho$ using a POVM $\{E_i\}$ leads to probabilities $p(i) = \tr(\rho E_i)$ by the Born rule. Since the $E_i$ are positive and they sum up to $I_n$ these probabilities indeed form a proper probability distribution: $\sum_i p(i) = 1$. 

If the POVM is minimal IC then we can write each $\rho$ uniquely in terms of the $E_i$:
$$\rho = \sum_i \alpha_i \frac{E_i}{\tr(E_i)}.$$
We have chosen to normalise $E_i$ to trace 1 so that the $\alpha_i$ sum up to 1.

Now we can find a relation between the $\alpha$'s and the $p(i)$'s:
$$p(i) = \tr(\rho E_i) = \sum_j \alpha_j \tr(\frac{E_j}{\tr(E_j)}E_i) = \sum_j T(i\lvert j) \alpha_j$$
where $T$ is a matrix defined as
$$T_{ij} = T(i\lvert j) = \tr(\frac{E_j}{\tr(E_j)}E_i).$$

$T$ is a stochastic matrix: $\sum_i T(i\lvert j) = 1$, and it is doubly stochastic if and only if all the $E_i$ have the same trace, which then necessarily has to be $\frac{1}{n}$ because $n = \tr(I_n) = \sum_{i=1}^{n^2} \tr(E_i)$. We will refer to $T$ as a \emph{transition matrix} for $\{E_i\}$.

We have the relation $p = T\alpha$ where we consider $p$ and $\alpha$ as vectors. $T$ has to be invertible because the $E_i$ form a basis, so $\alpha = T^{-1} p$, which means we can write $\rho$ in terms of its probabilities over $E_i$:
$$\rho = \sum_i (T^{-1}p)_i \frac{E_i}{\tr(E_i)}.$$
Note that while $T$ is stochastic, $T^{-1}$ will in every case contain some negative elements and so will be just quasi-stochastic \cite{ferrie2008frame}. This is not too surprising because if we had a POVM that has a $T^{-1}$ that is stochastic we would have succeeded in finding a non-contextual hidden variable model for quantum theory.

Now suppose we have for $M_m$ a minimal IC-POVM $\{E_i^\prime\}$. And consider a completely positive trace preserving (CPTP) map $\Phi: M_n\rightarrow M_m$. Let $\sigma = \Phi(\rho)$. We wish to know the probability distribution of $\sigma$ over $E_i^\prime$ in terms of the probability distribution of $\rho$ over $E_i$. Using the expansion of $\rho$ in terms of $p$:

$$q(i) = \tr(\sigma E_i^\prime) = \tr(\Phi(\rho)E_i^\prime) = \sum_j (T^{-1}p)_j \tr(\Phi\left(\frac{E_j}{\tr(E_j)}\right)E_i^\prime) = \sum_j r(i\lvert j) (T^{-1}p)_j$$
where we have introduced a matrix $r$ defined as
$$r_{ij} = r(i\lvert j) =  \tr(\Phi\left(\frac{E_j}{\tr(E_j)}\right)E_i^\prime).$$
The matrix $r$ is again stochastic because $\Phi$ is trace preserving. The matrix is doubly stochastic if and only if the $E_i$ and $E_j^\prime$ all have the same trace and $\Phi$ is unital.

We see that we have now translated the equation $\sigma = \Phi(\rho)$ into the equation $q = rT^{-1}p$. We can also translate composition of maps. Let $\Psi: M_m\rightarrow M_k$ be CPTP and $\{E_i^{\prime\prime}\}$ be a minimal IC-POVM on $M_k$. Set
$$
T^\prime(i\lvert j) = \tr(\frac{E_j^\prime}{\tr(E_j^\prime)}E_i^\prime), \qquad
s(i\lvert j) = \tr(\Psi\left(\frac{E_j^\prime}{\tr(E_j^\prime)}\right)E_i^{\prime\prime}).
$$

Then setting $\sigma^\prime = (\Psi\circ \Phi)(\rho) = \Psi(\Phi(\rho)) = \Psi(\sigma)$ and letting $q^\prime$ be the probability distribution associated to $\sigma^\prime$  we can derive in the same way as before that
$$q^\prime = s(T^\prime)^{-1} q = s(T^\prime)^{-1} r T^{-1} p.$$
Since the map $\Psi\circ\Phi$ is completely determined by what it does on states this also completely determines the matrix that we should associate with it, namely: $s(T^\prime)^{-1} r$.

Note that the constructions in this section work equally well with quasi-POVMs, replacing probabilities with quasi-probabilities and stochastic matrices with quasi-stochastic matrices.

\section{Functorial embedding from POVM's} \label{sec:functor}
We've seen how to translate composition of quantum channels into composition of quasi-stochastic matrices. To make this transition formal we will show that this induces a functor between the relevant categories.

\begin{definition}
\CPM is the category which has as objects the natural numbers $n>0$, and its morphisms $\Phi: n\rightarrow m$ are CPTP maps $\Phi: M_n\rightarrow M_m$. Composition is the usual composition of linear maps. \emph{States} are maps $\tilde{\rho}: 1 \rightarrow n$.
\end{definition}
Note that the definition of states corresponds to the definition of state we have used above, because if $\tilde{\rho}: M_1 = \mathbb{C} \rightarrow M_n$ then it is completely defined by its action on $1$, and we see that $\tilde{\rho}(1) = \rho\in M_n$ is positive and trace 1. We will most of the time simply use the actual states and not the morphism using the abuse of notation $\Phi(\rho) := \Phi \circ \tilde{\rho}$.

\begin{definition}
\textbf{QStoch} is the category which has as objects the natural numbers $n>0$ and its morphisms $A: n\rightarrow m$ are quasi-stochastic matrices $A\in M^{n\times m}(\mathbb{R})$ and composition works by regular matrix multiplication. States $p: 1 \rightarrow n$ correspond to quasi-probability distributions.
\end{definition}

We saw above that when we associate stochastic matrices $s$ and $r$ to $\Psi$ and $\Phi$, the composition $\Psi\circ \Phi$ has the matrix $sT^{-1} r$ associated to it, where $T$ is determined by the choice of POVM on the intermediate space. We can capture this in a category by modifying the notion of composition: $s*r := sT^{-1}r$.

\begin{definition}
Let $T = (T^{(i)})_{i=1}^\infty$ be a family of invertible quasi-stochastic matrices where $T^{(i)}$ is a $i\times i$ square matrix. Define \textbf{QStoch}$_T$ to be the category with the same objects and morphisms as \textbf{QStoch} but with composition of $r: n\rightarrow m$ and $s: m\rightarrow k$ defined as $s*r = s(T^{(m)})^{-1}r$.
\end{definition}
The associativity of this composition follows from the associativity of matrix multiplication and the new identity morphism for $n$ is now $T^{(n)}$ so that $\Qstoch_T$ is indeed a category for any choice of invertible quasi-stochastic matrices.

\begin{theorem}
\label{theor:embed}
Fix for every finite dimension $n>0$ a minimal IC-POVM $\{E_i^{(n)}\}$ and let $T = (T^{(i)})$ be the family of matrices where $T^{(i)} = T_n$ is the transition matrix for $\{E_j^{(n)}\}$ as defined above when $i=n^2$ and otherwise let it be the identity matrix. Then $Q:\CPM \rightarrow \Qstoch_{T}$ defined by $Q(n) = n^2$ and 
$$ Q(\Phi: n \rightarrow m)_{ij} = \tr(\Phi\left(\frac{E_j^{(n)}}{\tr(E_j^{(n)})}\right)E_i^{(m)})$$
is a faithful functor.
\end{theorem}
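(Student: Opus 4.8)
The plan is to check the three functor axioms in turn — preservation of identities, preservation of composition, and functoriality on morphisms — and then argue faithfulness separately. Since the categorical data $Q(n) = n^2$ and the matrix $Q(\Phi)_{ij} = \tr(\Phi(E_j^{(n)}/\tr(E_j^{(n)}))\,E_i^{(m)})$ have already been worked out in Section~\ref{sec:prelim} (this is precisely the matrix $r$ associated to $\Phi$), most of the work is bookkeeping: one just has to verify that the \emph{modified} composition rule $s * r = s(T^{(m)})^{-1}r$ in $\Qstoch_T$ is exactly the one forced by the computation $q' = s(T')^{-1}rT^{-1}p$ done in the preliminaries. First I would record that $Q(\Phi)$ is indeed quasi-stochastic (in fact stochastic), so it really is a morphism of $\Qstoch_T$; this is the observation that $\Phi$ trace-preserving gives $\sum_i r(i|j) = \tr(\Phi(E_j/\tr E_j)) = 1$.

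Next I would verify $Q(\mathrm{id}_n) = T^{(n^2)} = T_n$. Unpacking the definition, $Q(\mathrm{id}_n)_{ij} = \tr\!\big(\frac{E_j^{(n)}}{\tr(E_j^{(n)})}E_i^{(n)}\big)$, which is by definition the transition matrix $T_n$; and $T_n$ is exactly the identity morphism on $n^2$ in $\Qstoch_T$, as noted right after the definition of $\Qstoch_T$. For composition, given $\Phi: n\to m$ and $\Psi: m\to k$ I would expand $Q(\Psi\circ\Phi)_{ij} = \tr\!\big((\Psi\circ\Phi)(E_j^{(n)}/\tr E_j^{(n)})\,E_i^{(k)}\big)$, then insert a resolution of the identity on $M_m$ in the intermediate basis: write $\Phi(E_j^{(n)}/\tr E_j^{(n)}) = \sum_\ell (T_m^{-1}r)_{\ell j}\,E_\ell^{(m)}$ using the expansion of any element of $M_m$ in the minimal IC-POVM $\{E_\ell^{(m)}\}$ together with the relation $\alpha = T^{-1}p$ from the preliminaries. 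Applying $\Psi$, taking the trace against $E_i^{(k)}$, and recognising the pieces as $s$ and $T_m$ gives $Q(\Psi\circ\Phi) = s\,(T^{(m^2)})^{-1}\,r = Q(\Psi)*Q(\Phi)$, which is the composition in $\Qstoch_T$. The one point to be careful about here is the index/dimension matching: the intermediate object is $m$, so the relevant matrix in the family is $T^{(m^2)} = T_m$, which is exactly what the modified composition on $\Qstoch_T$ uses, so everything lines up.

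For faithfulness, suppose $Q(\Phi) = Q(\Psi)$ for two CPTP maps $\Phi,\Psi: M_n\to M_m$. Then $\tr\!\big(\Phi(E_j^{(n)}/\tr E_j^{(n)})\,E_i^{(m)}\big) = \tr\!\big(\Psi(E_j^{(n)}/\tr E_j^{(n)})\,E_i^{(m)}\big)$ for all $i,j$. Since $\{E_i^{(m)}\}$ is informationally complete it spans $M_m$, so $\tr(X E_i^{(m)}) = \tr(Y E_i^{(m)})$ for all $i$ forces $X = Y$; hence $\Phi(E_j^{(n)}/\tr E_j^{(n)}) = \Psi(E_j^{(n)}/\tr E_j^{(n)})$ for every $j$. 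But $\{E_j^{(n)}\}$ is a basis of $M_n$, so $\Phi$ and $\Psi$ agree on a basis and are therefore equal as linear maps. I do not anticipate a genuine obstacle: the main thing to get right is keeping the two transition matrices $T_n$ and $T_m$ (and the identity-matrix padding for non-square dimensions $i\neq n^2$) straight, and making explicit the elementary linear-algebra fact that pairing against an IC-POVM separates operators, which is what drives both the composition identity and faithfulness.
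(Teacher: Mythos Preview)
Your proposal is correct and, for the functoriality part, essentially identical to the paper's proof: the paper also observes $Q(\mathrm{id}_n)=T_n$ and says composition is preserved by construction, pointing back to the same computation $q' = s(T')^{-1}rT^{-1}p$ from Section~\ref{sec:prelim} that you spell out explicitly.

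For faithfulness you take a slightly different (and arguably cleaner) route. The paper argues in two steps: first, informational completeness implies $Q$ separates \emph{states}; second, functoriality then forces $Q(\Phi)=Q(\Psi)\Rightarrow Q(\Phi\circ\rho)=Q(\Psi\circ\rho)$ for all states $\rho$, whence $\Phi\circ\rho=\Psi\circ\rho$ and so $\Phi=\Psi$. You instead argue directly at the level of linear maps: $\{E_i^{(m)}\}$ spanning $M_m$ means the trace pairing separates operators, so $\Phi$ and $\Psi$ agree on each $E_j^{(n)}/\tr E_j^{(n)}$, and since those form a basis the maps are equal. Your argument is more elementary and avoids invoking functoriality a second time; the paper's argument has the mild advantage of being phrased in terms of the categorical data (states as morphisms from $1$) and thus transfers verbatim to settings where one might not have a linear structure on morphisms. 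Both are perfectly valid here.
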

\begin{proof}
We have defined the composition in the category $\Qstoch_T$ precisely so that $Q$ preserves composition: $Q(\Psi\circ \Phi) = Q(\Psi)*Q(\Phi)$, and it is easy enough to realise that $Q(id_n) = T_n$ which acts as the identity in $\Qstoch_T$, so that $Q$ is indeed a functor.

Note that $M_1 = \mathbb{C}$ has a unique choice for a minimal IC-POVM, namely $\{1\}$. We then see that $Q(\tilde{\rho}: 1\rightarrow n)_{i1} = \tr(\tilde{\rho}(1)E_i^{(n)}) = \tr(\rho E_i^{(n)})$ which is just the Born rule as expected.

Since the $E_i$ are informationally complete, all the states are sent to different probability distributions. Now suppose $Q(\Phi)=Q(\Psi)$. Then we also get $Q(\Phi\circ \rho) = Q(\Phi)*Q(\rho) = Q(\Psi)*Q(\rho) = Q(\Psi\circ \rho)$ so that we must have $\Phi\circ \rho = \Psi \circ \rho$ for all states $\rho$. Since a map is completely defined by its action on states we must then have $\Phi = \Psi$, so that $Q$ is indeed faithful.
\end{proof}

Note that this functor maps all states and maps to distributions and matrices with positive entries. All the negativity is hidden in the modified composition. This is analogous to the \emph{modified probability calculus} of \cite{ferrie2008frame} and the \emph{urgleichung} of \cite{appleby2017qplex}. We can bring this negativity more to the forefront using the following result.

\begin{theorem}
Let $T = (T^{i})_{i=1}^\infty$ be a family of invertible quasi-stochastic matrices of size $i\times i$.  The functor $F_T:\Qstoch_T \rightarrow \Qstoch$ defined by $F_T(n) = n$ and $F_T(A: n \rightarrow m) = A(T^{(n)})^{-1}$ is an isomorphism of categories.
\end{theorem}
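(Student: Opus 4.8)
The plan is to verify directly that $F_T$ is a functor and then to write down an explicit inverse functor, which yields a strict isomorphism of categories (stronger than a mere equivalence).

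First I would check that $F_T$ is well defined and functorial. Since products of quasi-stochastic matrices are again quasi-stochastic (this is just the statement that composition in $\Qstoch$ is well defined) and the inverse of a quasi-stochastic matrix is quasi-stochastic (as remarked after the first definition), $F_T(A) = A(T^{(n)})^{-1}$ is indeed a morphism $n \to m$ in $\Qstoch$ whenever $A: n\to m$ is a morphism in $\Qstoch_T$. The identity on $n$ in $\Qstoch_T$ is $T^{(n)}$, and $F_T(T^{(n)}) = T^{(n)}(T^{(n)})^{-1} = I_n$, the identity on $n$ in $\Qstoch$. For composition, given $r: n\to m$ and $s: m\to k$ in $\Qstoch_T$ we have $F_T(s*r) = F_T(s(T^{(m)})^{-1}r) = s(T^{(m)})^{-1}r(T^{(n)})^{-1}$, whereas the composite of $F_T(s)$ and $F_T(r)$ in $\Qstoch$ is the matrix product $(s(T^{(m)})^{-1})(r(T^{(n)})^{-1})$; these agree, so $F_T$ preserves composition.

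Next I would exhibit the candidate inverse $G:\Qstoch \to \Qstoch_T$ defined on objects by $G(n)=n$ and on morphisms by $G(B: n\to m) = BT^{(n)}$. The same two facts about quasi-stochastic matrices show $G$ is well defined; $G(I_n) = T^{(n)}$ is the identity on $n$ in $\Qstoch_T$; and for $B: n\to m$ and $C: m\to k$ in $\Qstoch$ one computes $G(CB) = CBT^{(n)} = (CT^{(m)})(T^{(m)})^{-1}(BT^{(n)}) = G(C)*G(B)$, so $G$ is a functor. Finally, both composites $F_T\circ G$ and $G\circ F_T$ are the identity on objects, and on morphisms $F_T(G(B)) = (BT^{(n)})(T^{(n)})^{-1} = B$ and $G(F_T(A)) = (A(T^{(n)})^{-1})T^{(n)} = A$, so $F_T\circ G = \mathrm{id}_{\Qstoch}$ and $G\circ F_T = \mathrm{id}_{\Qstoch_T}$, establishing that $F_T$ is an isomorphism of categories.

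I do not expect a genuine obstacle here: the proof is essentially the observation that right-multiplication by $(T^{(n)})^{-1}$ is an invertible operation. The only thing requiring care is the index bookkeeping — right-multiplying by the transition matrix indexed by the correct (source) object at each stage so that all the $(T^{(m)})^{-1}T^{(m)}$ cancellations line up — together with the observation, used repeatedly, that the class of quasi-stochastic matrices is closed under both matrix multiplication and matrix inversion.
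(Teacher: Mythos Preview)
Your proof is correct and follows essentially the same approach as the paper: check that $F_T$ preserves identities and composition, then exhibit the explicit inverse functor $G(B)=BT^{(n)}$. The paper's version is terser (it simply asserts that composition is preserved and that $F_T^{-1}(A)=AT^{(n)}$ is an inverse functor), whereas you spell out the well-definedness via closure of quasi-stochastic matrices under products and inverses and verify the functoriality of $G$ explicitly, but the argument is the same.
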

\begin{proof}
    The identity morphism of $n$ in $\Qstoch_T$ is $T^{(n)}$, so that $F_T(id_n) = F_T(T^{(n)}) = T^{(n)}(T^{(n)})^{-1} = I_n$. That it preserves composition follows easily from the definition of composition in $\Qstoch_T$, so it is indeed a functor. That it is an isomorphism of categories follows because it has an inverse functor $F_T^{-1}(A) = AT^{(n)}$.  
\end{proof}

Now we see that the composition $F_T\circ Q$ gives for any family of minimal IC-POVM's an embedding of quantum theory into the category of quasi-stochastic maps. States are all mapped to proper probability distributions (no negative components), while effects do contain negative components. Instead of $F_T(A) = A(T^{(n)})^{-1}$ we could have also defined the isomorphism $F_T^\prime(A) = (T^{(m)})^{-1}A$. In that case effects would be mapped to proper probabilistic effects and states would instead contain negative components.

There are a lot of ways to represent quantum theory as a quasi-stochastic theory, but one of the problems is that it is hard to find a reason to prefer one over the other. Using the functorial embedding we can make it clear why there doesn't seem to be a preferred one:

\begin{theorem}\label{theor:nateq}
Let $\{E_i^{(n)}\}$ and $\{F_i^{(n)}\}$ be families of minimal IC-POVM's, and let the $T^1$ and $T^2$ be the respective families of matrices as defined in Theorem \ref{theor:embed}, and let $Q_i:\CPM \rightarrow \Qstoch_{T_i}$ be the respective functors. Then there exists a natural isomorphism $\eta: F_{T^1}\circ Q_1 \Rightarrow F_{T^2}\circ Q_2$.
\end{theorem}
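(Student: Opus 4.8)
The plan is to take $\eta_n$ to be the matrix that re-expresses the quasi-probability vector of a state over $\{E_i^{(n)}\}$ as its quasi-probability vector over $\{F_i^{(n)}\}$. Writing $T^1_n$ for the transition matrix of $\{E_i^{(n)}\}$ and setting $R^{(n)}_{ij}=\tr\!\big(\frac{E_j^{(n)}}{\tr(E_j^{(n)})}F_i^{(n)}\big)$, I would define $\eta_n:=R^{(n)}(T^1_n)^{-1}$. This choice is forced: by Section~\ref{sec:prelim}, if $p$ is the $\{E_i^{(n)}\}$-quasi-probability vector of $\rho$ then $\rho=\sum_j\big((T^1_n)^{-1}p\big)_j\frac{E_j^{(n)}}{\tr(E_j^{(n)})}$, so applying $\tr(\,\cdot\,F_i^{(n)})$ produces exactly $\eta_n p$ as the $\{F_i^{(n)}\}$-quasi-probability vector of $\rho$. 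Equivalently, $\eta_n=\Xi^{(n)}\circ(\Theta^{(n)})^{-1}$, where $\Theta^{(n)}:\rho\mapsto(\tr(\rho E_i^{(n)}))_i$ and $\Xi^{(n)}:\rho\mapsto(\tr(\rho F_i^{(n)}))_i$ are the coordinate isomorphisms $M_n\xrightarrow{\sim}\mathbb{C}^{n^2}$ determined by the two POVM families.

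First I would check that each $\eta_n$ is an isomorphism in $\Qstoch$. Quasi-stochasticity is immediate from the POVM conditions: $\sum_i R^{(n)}_{ij}=\tr\!\big(\frac{E_j^{(n)}}{\tr(E_j^{(n)})}\big)=1$ since $\sum_i F_i^{(n)}=I$, so $R^{(n)}$ has unit column sums; $(T^1_n)^{-1}$ is quasi-stochastic because $T^1_n$ is; and a product of two matrices with unit column sums again has unit column sums. Invertibility is clear since $\{E_i^{(n)}\}$ and $\{F_i^{(n)}\}$ are both bases of $M_n$, and $\eta_n^{-1}$ is the analogous matrix with the two families interchanged, hence quasi-stochastic as well. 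So $\eta_n$ is a genuine $\Qstoch$-iso.

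The substance is naturality: writing $G_k=F_{T^k}\circ Q_k$, I must verify $\eta_m\circ G_1(\Phi)=G_2(\Phi)\circ\eta_n$ in $\Qstoch$ for every CPTP map $\Phi:n\to m$. The clean route uses the identification, from Section~\ref{sec:prelim}, of $G_1(\Phi)$ with the linear map sending the $\{E_i^{(n)}\}$-quasi-probability vector of $\rho$ to the $\{E_i^{(m)}\}$-quasi-probability vector of $\Phi(\rho)$ (the content of the relation $q=rT^{-1}p$ there), and similarly for the $F$-family. Evaluating both composites on the vector $p_\rho$ of an arbitrary state $\rho$: the left side forms the $\{E_i^{(m)}\}$-representation of $\Phi(\rho)$ and rewrites it in the $\{F_i^{(m)}\}$-basis, while the right side first rewrites $p_\rho$ as the $\{F_i^{(n)}\}$-representation of $\rho$ and then transports it through $\Phi$ in the $F$-picture; both land on the $\{F_i^{(m)}\}$-quasi-probability vector of $\Phi(\rho)$. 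Since the density matrices span the real vector space of Hermitian matrices, the vectors $p_\rho$ span $\mathbb{R}^{n^2}$, so the two linear maps agree. In coordinate-isomorphism language this is the one-line cancellation $\eta_m\circ G_1(\Phi)=\Xi^{(m)}(\Theta^{(m)})^{-1}\cdot\Theta^{(m)}\Phi(\Theta^{(n)})^{-1}=\Xi^{(m)}\Phi(\Theta^{(n)})^{-1}=\Xi^{(m)}\Phi(\Xi^{(n)})^{-1}\cdot\Xi^{(n)}(\Theta^{(n)})^{-1}=G_2(\Phi)\circ\eta_n$, where $\Phi$ is read as its action as a linear map.

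The step I expect to demand the most care is bookkeeping rather than any genuine difficulty: $F_{T^1}$ and $F_{T^2}$ untwist by different transition matrices, so one has to be sure that $\eta_n$ is built from $T^1_n$, the transition matrix attached to the \emph{source} object, and confirm that this is exactly the twist making all occurrences of the $T$'s cancel in the naturality square — using $T^2$ where $T^1$ belongs would break commutativity. Apart from that the theorem is soft: its whole content is that the change of IC-POVM basis is itself a legal morphism of $\Qstoch$ (which is precisely the normalization $\sum_i F_i^{(n)}=I$) and that it is automatically natural, because each functor is just ``conjugate $\Phi$ by the coordinate isomorphism''.
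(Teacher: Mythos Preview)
Your proposal is correct, and you arrive at exactly the same natural isomorphism as the paper: with the paper's notation $S^{(n)}(i\lvert j)=\tr\!\big(\tfrac{E_j^{(n)}}{\tr(E_j^{(n)})}F_i^{(n)}\big)$, both you and the paper take $\eta_n=S^{(n)}(T^1_n)^{-1}$. The difference lies in how naturality is verified. The paper works coordinate-wise: it introduces auxiliary change-of-basis matrices $\alpha^{(n)}$ and $\beta^{(m)}$, expands $Q_1(\Phi)$ indexwise in terms of $Q_2(\Phi)$, identifies $\alpha^{(n)}=(T^2_n)^{-1}S^{(n)}$ and $\beta^{(n)}=T^1_n(S^{(n)})^{-1}$ by specialising to $\Phi=\mathrm{id}$, and only then reads off the naturality square. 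Your route packages all of this into the single observation that $G_k(\Phi)=\Xi_k^{(m)}\circ\Phi\circ(\Xi_k^{(n)})^{-1}$ is literally conjugation by the coordinate isomorphism determined by the $k$th POVM family, so $\eta_n=\Xi^{(n)}\circ(\Theta^{(n)})^{-1}$ makes the naturality square commute by a telescoping cancellation. What your approach buys is transparency: naturality is visibly automatic once one sees the functors as change-of-basis, and no index computation is needed. What the paper's approach buys is that every matrix appearing is written down explicitly, which may be useful elsewhere (e.g.\ for the monoidal coherence computations). You also explicitly check that $\eta_n$ is quasi-stochastic, something the paper leaves implicit; this is a small but genuine improvement in completeness.
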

\begin{proof}
See appendix \ref{app:nateq}.
\end{proof}

Any minimal embedding is naturally isomorphic to another, which means that as far as the categories are concerned there really is only one embedding. Note that the category \Qstoch doesn't ``see'' negativity, so these embeddings can still be very different in terms of which maps are represented with negative components.

We've now shown that all quantum channels and all states can be represented in a coherent manner in terms of quasi-probabilities, but this is not really \emph{all} of quantum theory. We should also consider as done in \cite{appleby2017qplex,ferrie2008frame} the probabilities arising from measuring a state using a POVM. So lets consider a measurement with $K$ different outcomes where the probabilities for a state $\rho$ are given as $P(k\lvert \rho) = \tr(A_k\rho)$ where $\{A_k\}_{k=1}^K$ denote a POVM on $M_n$. Expanding $\rho$ in terms of its stochastic representation: $P(k\lvert \rho) = \sum_i (T_n^{-1}Q(\rho))_i\tr(A_k E_i/\tr(E_i))$. It is helpful to consider the POVM $\{A_k\}$ as a map $\hat{A}:M_n \rightarrow \mathbb{C}^K$ in the category of C$^*$-algebras where $\hat{A}(B)_k = \tr(A_kB)$. Abusing notation a bit we can then write
$$
    Q(\hat{A})(k\lvert i) = \langle e_k,\hat{A}\left(\frac{E_i}{\tr(E_i)}\right)\rangle = \tr(A_k \frac{E_i}{\tr(E_i)})
$$

where we let $Q$ map $\mathbb{C}^K$ to $k$ and we take the standard basis $\{e_k\}$ as `the POVM' for $\mathbb{C}^K$. This makes sense when we view $\mathbb{C}^K$ as the diagonal matrices in $M_K$ in which case the standard basis components $e_k$ correspond to the diagonal projections to the $k$th component. It is easy to check that $Q(\hat{A})$ is indeed a stochastic matrix and we see that $P(k\lvert \rho) = \sum_i Q(\hat{A})(k\lvert i)(T_n^{-1}Q(\rho))_i$ so that
$$
P(\cdot \lvert \rho) = Q(\hat{A})T_n^{-1}Q(\rho) = Q(\hat{A})*Q(\rho).
$$
This looks exactly the same as applying a CPTP map to a state. In this view measuring a POVM corresponds to a `quantum-classical' channel that takes as input a quantum state and outputs a classical state (a probability distribution).

\section{Preserving the tensor product} \label{sec:tensor}
An important part of quantum theory is the possibility of parallel composition: the tensor product. This can be captured by the fact that the category of quantum processes \CPM is a (strict) \emph{monoidal category}\footnote{We will work exclusively with strict monoidal categories in this paper, so we will ignore the coherence isomorphisms that usually appear.}:

\begin{definition}
    A category $\mathbb{A}$ is called \emph{strict monoidal} when it has a bifunctor $\otimes: \mathbb{A}\times \mathbb{A} \rightarrow \mathbb{A}$, and an identity object $I$, such that for all objects $A,B,C \in \mathbb{A}$: $I\otimes A = A\otimes I = A$ and $A\otimes (B\otimes C) = (A\otimes B)\otimes C$. 
\end{definition}
The functoriality condition boils down to $id_A\otimes id_B = id_{A\otimes B}$ and $(f_1\otimes f_2)\circ (g_1\otimes g_2) = (f_1\circ g_1)\otimes (f_2\circ g_2)$ for all compatible morphisms.

It is easy to verify that the linear algebraic tensor product turns \CPM and \Qstoch into monoidal categories, where on objects it acts as $n\otimes m = nm$. 

The relevant notion of a morphism between monoidal categories is that of a strong monoidal functor.
\begin{definition}
    A functor between two strict monoidal categories $F: \mathbb{A}\rightarrow \mathbb{B}$ is called \emph{strong monoidal} when the two functors $F_1(A,B) = F(A)\otimes F(B)$ and $F_2(A,B) = F(A\otimes B)$ are naturally isomorphic via $\alpha: F_1\Rightarrow F_2$ such that the components $\alpha_{A,B}: F(A)\otimes F(B) \rightarrow F(A\otimes B)$ satisfy the coherence condition $\alpha_{A\otimes B, C}\circ (\alpha_{A,B}\otimes id_C) = \alpha_{A,B\otimes C}\circ(id_A \otimes\alpha_{B,C})$. The naturality condition means that $\alpha_{B_1,B_2}\circ (F(f_1)\otimes F(f_2)) = F(f_1\otimes f_2)\circ \alpha_{A_1,A_2}$ for all morphisms $f_i:A_i\rightarrow B_i$.

    The functor is called \emph{strict monoidal} when all the $\alpha$ are identities and it is called \emph{lax monoidal} when the $\alpha$ are not necessarily isomorphisms.
\end{definition}

There are multiple choices for a tensor product in $\Qstoch_T$. We will choose the tensor product such that the functor $F^\prime:\Qstoch_T \rightarrow \Qstoch$ defined by $F^\prime(A:n\rightarrow m) = T_m^{-1} A$ is strict monoidal. Denote the tensor product in $\Qstoch_T$ by $\otimes^\prime$, then we should have $F^\prime(A\otimes^\prime B) = F^\prime(A)\otimes F^\prime(B)$. Writing this out we get 
$$
    T_{m_1m_2}^{-1} (A\otimes^\prime B) = (T_{m_1}^{-1}\otimes T_{m_2}^{-1})(A\otimes B)
$$ 
so that $\otimes^\prime$ should be defined as $A\otimes^\prime B := T_{m_1m_2}(T_{m_1}^{-1}\otimes T_{m_2}^{-1})(A\otimes B).$
It is instructive to check that this indeed turns $\Qstoch_T$ into a monoidal category:
\begin{align*}
    (A_1\otimes^\prime A_2)*(B_1\otimes^\prime B_2) &= T_{k_1k_2}(T_{k_1}^{-1}\otimes T_{k_2}^{-1})(A_1\otimes A_2) T_{m_1m_2}^{-1}T_{m_1m_2}(T_{m_1}^{-1}\otimes T_{m_2}^{-1})(B_1\otimes B_2) \\
    &= T_{k_1k_2}(T_{k_1}^{-1}\otimes T_{k_2}^{-1}) (A_1T_{m_1}^{-1}B_1)\otimes (A_2T_{m_2}^{-1}B_2) \\
    &= (A_1*B_1)\otimes^\prime (A_2*B_2)
\end{align*}

We can now check that the functor $F: \Qstoch_T\rightarrow \Qstoch$ is strong monoidal. We have $F(A\otimes^\prime B) = T_{m_1m_2}(T_{m_1}^{-1}\otimes T_{m_2}^{-1})(A\otimes B) T_{n_1n_2}^{-1}$ and $F(A)\otimes F(B) = (A\otimes B)(T_{n_1}^{-1}\otimes T_{n_2}^{-1})$. This can be rewritten to
$$
    T_{m_1m_2}(T_{m_1}^{-1}\otimes T_{m_2}^{-1}) (F(A)\otimes F(B)) = F(A\otimes^\prime B)T_{n_1n_2}(T_{n_1}^{-1}\otimes T_{n_2}^{-1})
$$
which means our coherence isomorphisms are $\alpha_{n_1,n_2} = T_{m_1m_2}(T_{m_1}^{-1}\otimes T_{m_2}^{-1})$. It then follows by straightforward matrix multiplication that these satisfy the coherence conditions.

Note that we could have chosen the tensor product in $\Qstoch_T$ such that $F$ would be strict monoidal and $F^\prime$ would be strong monoidal. The reason we have chosen this tensor product is that it makes the following easier.

\begin{theorem}\label{theor:monoid}
    The functor $Q: \CPM\rightarrow \Qstoch_T$ as defined in Theorem \ref{theor:embed} for a family of minimal IC-POVM's is strong monoidal with coherence isomorphisms $\alpha_{n_1,n_2}=S_{n_1,n_2}$, where
    $$S_{n_1,n_2}(j\lvert i_1i_2) = \tr(\frac{E_{i_1}^{n_1}}{\tr(E_{i_1}^{n_1})}\otimes \frac{E_{i_2}^{n_2}}{\tr(E_{i_2}^{n_2})} E_j^{n_1n_2})
    $$
\end{theorem}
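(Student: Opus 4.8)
The plan is to push the whole question through the strict monoidal isomorphism $F'\colon\Qstoch_T\to\Qstoch$, $F'(A\colon n\to m)=T_m^{-1}A$, constructed just before the theorem (its inverse sends $B$ to $T_mB$). Since $F'$ is a strict monoidal isomorphism of categories, it transports the strong-monoidal-structure question verbatim from $\Qstoch_T$ to $\Qstoch$: the functor $Q$ is strong monoidal with coherence isomorphisms $\alpha_{n_1,n_2}$ if and only if $F'\circ Q$ is strong monoidal with coherence isomorphisms $F'(\alpha_{n_1,n_2})$. So it suffices to prove the statement for $F'\circ Q$ with $\alpha_{n_1,n_2}=S_{n_1,n_2}$. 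Write $P^{(n)}_i:=E^{(n)}_i/\tr(E^{(n)}_i)$ for the normalised POVM elements and let $\Lambda^{(n)}_k:=\sum_i(T_n^{-1})_{ki}E^{(n)}_i$ be the basis of $M_n$ dual to $\{P^{(n)}_i\}$, characterised by $\tr(\Lambda^{(n)}_k P^{(n)}_i)=\delta_{ki}$. Unfolding the definitions, $(F'\circ Q)(\Phi\colon n\to m)(k\lvert j)=\tr(\Phi(P^{(n)}_j)\,\Lambda^{(m)}_k)$ and $F'(S_{n_1,n_2})(k\lvert i_1i_2)=\tr(P^{(n_1)}_{i_1}\otimes P^{(n_2)}_{i_2}\cdot\Lambda^{(n_1n_2)}_k)$. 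The only analytic ingredient I will use is the dual-frame resolution of the identity, $\beta=\sum_k\tr(\beta\,\Lambda^{(n)}_k)\,P^{(n)}_k=\sum_k\tr(\beta\,P^{(n)}_k)\,\Lambda^{(n)}_k$ for every Hermitian $\beta\in M_n$, which is just the reconstruction of a state from its POVM statistics recorded in Section~\ref{sec:prelim}.

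First I would check that each $S_{n_1,n_2}$ is an invertible quasi-stochastic matrix, hence an isomorphism in $\Qstoch_T$ (and $F'(S_{n_1,n_2})$ one in $\Qstoch$). Quasi-stochasticity is immediate: summing a column over $j$ and using $\sum_j E^{(n_1n_2)}_j=I$ gives $\tr(P^{(n_1)}_{i_1}\otimes P^{(n_2)}_{i_2})=1$. For invertibility, observe that as a linear map $\mathbb{R}^{(n_1n_2)^2}\to\mathbb{R}^{(n_1n_2)^2}$ the matrix $S_{n_1,n_2}$ factors as $\psi\circ\phi$, where $\phi(c)=\sum_{i_1i_2}c_{i_1i_2}\,P^{(n_1)}_{i_1}\otimes P^{(n_2)}_{i_2}$ takes values in the Hermitian matrices and $\psi(\beta)_j=\tr(\beta\,E^{(n_1n_2)}_j)$. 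Both domain and codomain have real dimension $(n_1n_2)^2$; $\phi$ is bijective since $\{P^{(n_1)}_{i_1}\otimes P^{(n_2)}_{i_2}\}$ is a basis (a tensor product of bases), and $\psi$ is bijective since $\{E^{(n_1n_2)}_j\}$ is informationally complete and the trace form is nondegenerate. Hence $S_{n_1,n_2}$ is invertible, and being quasi-stochastic it has a quasi-stochastic inverse.

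The heart is naturality and the coherence (pentagon) condition, and both collapse onto the resolution identity. For naturality I must show, for CPTP maps $f_i\colon M_{a_i}\to M_{b_i}$, that $F'(S_{b_1,b_2})\,((F'Q)(f_1)\otimes(F'Q)(f_2))=(F'Q)(f_1\otimes f_2)\,F'(S_{a_1,a_2})$ as matrices. Expanding both sides into sums of products of traces, the left side becomes $\tr([\sum_{k_1}\tr(f_1(P^{(a_1)}_{j_1})\Lambda^{(b_1)}_{k_1})P^{(b_1)}_{k_1}]\otimes[\dots]\cdot\Lambda^{(b_1b_2)}_l)$, which the resolution identity applied on each tensor factor $M_{b_1},M_{b_2}$ turns into $\tr(f_1(P^{(a_1)}_{j_1})\otimes f_2(P^{(a_2)}_{j_2})\cdot\Lambda^{(b_1b_2)}_l)$; the right side reduces to the same expression by one application of the resolution identity on $M_{a_1a_2}=M_{a_1}\otimes M_{a_2}$ together with $(f_1\otimes f_2)(X\otimes Y)=f_1(X)\otimes f_2(Y)$. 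For the coherence condition $\alpha_{A\otimes B,C}*(\alpha_{A,B}\otimes'\mathrm{id}_C)=\alpha_{A,B\otimes C}*(\mathrm{id}_A\otimes'\alpha_{B,C})$, pushing through $F'$ (using $F'(\mathrm{id})=\mathrm{id}$ and strict monoidality) turns it into an identity of $(abc)^2\times(abc)^2$ matrices; by the same bookkeeping both sides reduce to $\tr(P^{(a)}_{i_1}\otimes P^{(b)}_{i_2}\otimes P^{(c)}_{i_3}\cdot\Lambda^{(abc)}_l)$ — the left side after resolving on the $M_{ab}$ factor, the right side after resolving on the $M_{bc}$ factor, with associativity of $\otimes$ identifying the two bracketings.

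The main obstacle is organisational rather than conceptual: in $\Qstoch_T$ the modified composition $*$ and the modified tensor $\otimes'$ both carry transition matrices $T_{(\cdot)}$, and a naive direct verification drowns in keeping track of which $T$ appears where. Routing through the strict monoidal functor $F'$ is precisely the device that makes all those matrices cancel cleanly, leaving a computation whose only real content is the dual-frame resolution identity and the linear-algebraic invertibility of $S_{n_1,n_2}$.
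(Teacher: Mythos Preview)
Your proof is correct and takes a genuinely different organisational route from the paper's. The paper works directly in $\Qstoch_T$: it first establishes the relation $Q(\rho_1\otimes\rho_2)=S_{n_1,n_2}*(Q(\rho_1)\otimes' Q(\rho_2))$ on states, then bootstraps this to arbitrary CPTP maps to get naturality; for coherence it unfolds $*$ and $\otimes'$ by hand, introduces an auxiliary change-of-basis matrix $\beta=S_{n_1,n_2}^{-1}T_{n_1n_2}$, and shows both sides of the pentagon equal a triple-factor object $S_{n_1,n_2,n_3}(j\lvert k_1,k_2,k_3)=\tr\bigl(P^{(n_1)}_{k_1}\otimes P^{(n_2)}_{k_2}\otimes P^{(n_3)}_{k_3}\,E^{(n_1n_2n_3)}_j\bigr)$. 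Your approach instead transports the whole problem through the strict monoidal isomorphism $F'$ once at the outset, so that all the $T$-matrices embedded in $*$ and $\otimes'$ cancel \emph{before} any computation begins, and then expresses $(F'\circ Q)(\Phi)$ and $F'(S_{n_1,n_2})$ uniformly as matrix elements in the biorthogonal pair $(P^{(n)}_i,\Lambda^{(n)}_k)$. This buys you a single mechanism---the dual-frame resolution $\beta=\sum_k\tr(\beta\,\Lambda_k)P_k$---that dispatches naturality and coherence in one stroke, and your common value $\tr(P^{(a)}_{i_1}\otimes P^{(b)}_{i_2}\otimes P^{(c)}_{i_3}\,\Lambda^{(abc)}_l)$ is exactly $T_{abc}^{-1}S_{a,b,c}$ in the paper's notation. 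You also make the invertibility of $S_{n_1,n_2}$ explicit (basis-change between two bases of the Hermitian part of $M_{n_1n_2}$), which the paper leaves implicit. The trade-off is that the paper's route is self-contained within $\Qstoch_T$ and never names the dual frame, whereas yours relies on the prior observation that $F'$ is strict monoidal; given that this was set up precisely for that purpose, your route is the more economical one.
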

\begin{proof}
See appendix \ref{app:monoid}.
\end{proof}

The composition of two strongly monoidal functors is again strongly monoidal, so $F\circ Q$ gives a strong monoidal functor of \CPM into \Qstoch. The functor is \emph{strong} monoidal precisely because the $S_{n_1,n_2}$ are invertible and therefore are isomorphisms. These $S$ are invertible because the POVM's are minimal. If they weren't minimal we wouldn't get a strong monoidal functor, but at most a \emph{lax} monoidal functor.

\section{Preserving the adjoint} \label{sec:adjoint}
Another structure that exists on quantum channels is the \emph{adjoint}. If we have a CP map $\Phi: M_n\rightarrow M_m$, its adjoint is the dual of the map with respect to the Hilbert-Schmidt inner product. That is, a map $\Phi^\dagger: M_m \rightarrow M_n$ (note that the direction of the map has changed) such that
$$
    \langle \Phi(A), B \rangle_{HS} = \Tr(\Phi(A)B^\dagger) = \Tr(A(\Phi^\dagger(B))^\dagger) = \langle A,\Phi^\dagger(B) \rangle_{HS}.
$$
If $\Phi(A) = UAU^\dagger$, then $\Phi^\dagger(A) = U^\dagger A U$, e.g., for unitary conjugation the adjoint is simply the Hermitian adjoint of the unitary. Hamiltonian evolution in time is represented by the unitary $\exp(itH)$ where the Hermitian adjoint is $\exp(-itH)$. The adjoint can therefore be interpreted as reversing the time direction. It is also how one transfers between the Schrödinger and Heisenberg picture of quantum mechanics.
The adjoint has the property that it distributes over composition, while changing the order of the morphisms: $(\Psi\circ \Phi)^\dagger = \Phi^\dagger \circ \Psi^\dagger$. This behaviour can be defined in a more general way:
\begin{definition}
    A \emph{dagger category} $\mathbb{A}$ is a category which has a contravariant functor $\dagger: \mathbb{A}\rightarrow \mathbb{A}$ that is the identity on objects and is its own inverse. Concretely this means that $id_A^\dagger = id_A$ and for all morphisms $(f^\dagger)^\dagger = f$, while composition satisfies $(g\circ f)^\dagger = f^\dagger \circ g^\dagger$. We will call a category a \emph{partial dagger category}, when the dagger is only defined for a subset of the morphisms.
    A \emph{dagger functor} is a functor between (partial) dagger categories $F: \mathbb{A}\rightarrow \mathbb{B}$ that preserves the dagger (when it exists) in the following way: $F\circ \dagger_\mathbb{A} = \dagger_\mathbb{B}\circ F$, e.g. $F(f^{\dagger_\mathbb{A}}) = (F(f))^{\dagger_\mathbb{B}}$.
\end{definition}

The category \CPM is not a dagger category as the adjoint of a trace preserving map $\Phi$ will only be trace preserving when $\Phi$ is unital. We could consider the category of trace preserving unital maps, but this is a bit restrictive in that it doesn't have morphisms between different objects. Therefore we will stick with \CPM as a partial dagger category and only define the dual of a map when it is unital.

\Qstoch is also only a partial dagger category with the transpose taking the role of the dagger. The transpose of a quasi-stochastic matrix is only quasi-stochastic when it is doubly quasi-stochastic (this is the exact analogue of trace preserving unital maps). We could again `fix' this by only considering doubly quasi-stochastic maps, but instead we will only define the dagger when a map is doubly quasi-stochastic.

$\Qstoch_T$ doesn't always have a partial dagger structure. The obvious choice is the transpose: $(A*B)^T = (AT^{-1}B)^T = B^T (T^T)^{-1}A^T$. We need $(A*B)^T = B^T * A^T$ so that equality here only holds when $T = T^T$. Looking at the definition of the transition matrix $T$ this is the case when all the components of the POVM have equal trace. In that case we define the partial dagger on $\Qstoch_T$ to be the transpose as defined on doubly quasi-stochastic matrices.

\begin{lemma}
    Let $\{E_i^{(n)}\}$ be a family of minimal IC-POVM's where all the components in a single POVM have equal trace, Then the functor $Q$ as in Theorem \ref{theor:embed} is a dagger functor.
\end{lemma}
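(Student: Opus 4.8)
The plan is to show that for any unital CPTP map $\Phi: M_n \to M_m$, the matrix $Q(\Phi^\dagger)$ equals the transpose $Q(\Phi)^T$, which is precisely the dagger of $Q(\Phi)$ in $\Qstoch_T$ under the hypothesis (equal-trace POVMs make $T = T^T$, so the transpose is the partial dagger). First I would write out the defining formula: $Q(\Phi^\dagger)_{ij} = \tr\!\left(\Phi^\dagger\!\left(\frac{E_j^{(m)}}{\tr(E_j^{(m)})}\right) E_i^{(n)}\right)$, where now $\Phi^\dagger: M_m \to M_n$ so the indices run correctly. By the defining property of the Hilbert--Schmidt adjoint, $\tr(\Phi^\dagger(A) B) = \tr(A \Phi(B))$ for Hermitian $A, B$ (the dagger signs drop since all the $E_i$ are Hermitian), so this rewrites as $\frac{1}{\tr(E_j^{(m)})}\tr\!\left(E_j^{(m)}\,\Phi(E_i^{(n)})\right)$.

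Next I would compare this to $Q(\Phi)^T_{ij} = Q(\Phi)_{ji} = \tr\!\left(\Phi\!\left(\frac{E_i^{(n)}}{\tr(E_i^{(n)})}\right) E_j^{(m)}\right) = \frac{1}{\tr(E_i^{(n)})}\tr\!\left(\Phi(E_i^{(n)})\,E_j^{(m)}\right)$. The two expressions agree up to the scalar factor $\tr(E_i^{(n)})/\tr(E_j^{(m)})$. This is where the equal-trace hypothesis does its work: within a single POVM on $M_n$ all components have trace $1/n$, and likewise on $M_m$ they all have trace $1/m$, but that still leaves a factor $m/n$ which is not $1$ in general. So the naive identification fails, and the genuine content of the lemma is that the transpose in $\Qstoch_T$ is taken with respect to doubly-quasi-stochastic matrices only, and the rescaling is exactly absorbed. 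Concretely, I would invoke the earlier observation that when the $E_i$ all have equal trace the transition matrix $T_n$ and the matrix $r = Q(\Phi)$ are \emph{doubly} stochastic (for $\Phi$ unital), and the dagger on $\Qstoch_T$ is the transpose defined on doubly quasi-stochastic matrices. I would check that $Q(\Phi)$ really is doubly quasi-stochastic here: columns sum to $1$ because $\Phi$ is trace preserving, and rows sum to $1$ because $\sum_j \tr(\Phi(E_i^{(n)}/\tr(E_i^{(n)})) E_j^{(m)}) = \tr(\Phi(E_i^{(n)}/\tr(E_i^{(n)}))) = \tr(E_i^{(n)}/\tr(E_i^{(n)})) = 1$ using $\sum_j E_j^{(m)} = I_m$ and that $\Phi$ is unital... wait, trace preserving suffices for the last step. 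Once both directions are quasi-stochastic, the scalar factors $\tr(E_i^{(n)}) = 1/n$ are uniform in $i$, so the factor is a global constant, and combined with the normalization built into the definition of $Q$ (we normalized each $E_i$ to trace $1$) the identification $Q(\Phi^\dagger) = Q(\Phi)^T$ goes through cleanly.

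Then I would dispatch the functoriality bookkeeping: a dagger functor must satisfy $Q \circ \dagger_{\CPM} = \dagger_{\Qstoch_T} \circ Q$ on all morphisms for which the dagger is defined, i.e. the unital ones. We have just verified this equality pointwise on unital $\Phi$. It remains to note that $Q$ already being a functor (Theorem~\ref{theor:embed}) and the daggers on both sides being the appropriate (partial) transposes means the diagram commutes, so $Q$ is a dagger functor in the partial sense of the definition. I would also remark, for completeness, that $id_n^\dagger = id_n$ is respected since $Q(id_n) = T_n$ is symmetric under the equal-trace hypothesis.

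The main obstacle I anticipate is precisely the scalar-factor reconciliation: it is tempting to think $Q(\Phi^\dagger) = Q(\Phi)^T$ holds on the nose for \emph{any} minimal IC-POVM, but the adjoint computation introduces trace factors $\tr(E_i^{(n)})/\tr(E_j^{(m)})$ that only collapse to a harmless global scalar (absorbed by double-stochasticity) when the equal-trace condition holds. Getting this step exactly right — tracking which normalizations are intrinsic to the definition of $Q$ versus which come from the hypothesis — is the crux; everything else is routine unwinding of definitions.
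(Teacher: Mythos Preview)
Your computation of $Q(\Phi^\dagger)_{ij}$ versus $Q(\Phi)_{ji}$ is correct up to the point where you isolate the scalar $\tr(E_i^{(n)})/\tr(E_j^{(m)}) = m/n$. The gap is in how you dispose of that factor. There is no mechanism by which ``double-stochasticity absorbs a global scalar''; a quasi-stochastic matrix times $m/n$ is simply not quasi-stochastic when $m\neq n$, and the transpose in $\Qstoch_T$ does nothing to change that. What you are missing is that the partial dagger in \CPM is only defined for \emph{unital trace-preserving} maps, and such a map $\Phi:M_n\to M_m$ forces $n=m$: trace preservation gives $\tr(\Phi(I_n))=\tr(I_n)=n$, while unitality gives $\Phi(I_n)=I_m$, hence $n=m$. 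The paper's proof uses exactly this: it works with $\Phi:M_n\to M_n$, notes $\tr(E_i)=1/n$ for all $i$, and the scalar factor is identically $1$. Once you observe $n=m$ your argument becomes the paper's argument and no further reconciliation is needed.

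A secondary issue: your check that $Q(\Phi)$ has row sums equal to $1$ has an index slip. You wrote $\sum_j \tr\!\big(\Phi(E_i^{(n)}/\tr E_i^{(n)})\,E_j^{(m)}\big)$, but the $j$-index in $Q(\Phi)_{ij}$ sits inside the argument of $\Phi$, not outside. The correct row sum is $\sum_j \tr\!\big(\Phi(E_j^{(n)}/\tr E_j^{(n)})\,E_i^{(m)}\big) = n\,\tr\!\big(\Phi(I_n)E_i^{(m)}\big) = n\,\tr(E_i^{(m)}) = n/m$, using equal traces and unitality. So $Q(\Phi)$ is doubly quasi-stochastic only when $n=m$, consistent with the paper's remark that doubly quasi-stochastic matrices are necessarily square.
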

\begin{proof}
    The dagger in \CPM is only defined when $\Phi: M_n\rightarrow M_n$ is CPTP unital, which necessitates that the input and output dimension are equal. Let $\{E_i\}$ be the POVM for $n$ (dropping the superscript). We have $\tr(E_i) = \frac{1}{n}$ for all $i$, because the POVM is minimal and all the traces are equal. So now
    $$
        Q(\Phi^\dagger)_{ij} = \tr(\Phi^\dagger\left(\frac{E_j}{\tr(E_j)}\right)E_i) = \frac{1}{n}\tr(E_i\Phi^\dagger(E_j)) = \frac{1}{n}\tr(E_i(\Phi^\dagger(E_j))^\dagger) = \frac{1}{n}\tr(\Phi(E_i)E_j^\dagger) = Q(\Phi)_{ji}
    $$
    where $E_i^\dagger = E_i$, because they are by definition Hermitian. Indeed $Q(\Phi^\dagger) = Q(\Phi)^T$ as required.
\end{proof}

Somewhat surprisingly this won't in general extend to a dagger functor from \CPM to \Qstoch by appending $Q$ with $F:\Qstoch_T \rightarrow \Qstoch$. Let $A$ be a $n\times n$ doubly quasi-stochastic matrix and assume that $T$ is a symmetric matrix. Recall that $F(A) = AT^{-1}_n$ which gives $F(A^T) = A^T T^{-1}_n$ while $F(A)^T = (AT^{-1}_n)^T = T^{-1}_n A^T$. Now, of course in general $A^T T^{-1}_n \neq T^{-1}_n A^T$ so that this functor is \emph{not} a dagger functor. In fact, it will only be a dagger functor if $T$ commutes with all quasi-stochastic matrices.

\begin{lemma}
    Let $J_n$ denote the $n\times n$ matrix with every component equal to 1: $(J_n)_{ij} = 1$. Suppose the $n\times n$ doubly quasi-stochastic matrix $T$ commutes with all $n\times n$ doubly quasi-stochastic matrices, then there are real constants $\alpha,\beta$ such that $T = \alpha I_n + \beta J_n$ where $\alpha + n\beta = 1$.
\end{lemma}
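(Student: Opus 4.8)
The plan is to test the commutation hypothesis against a conveniently small supply of doubly quasi-stochastic matrices---the permutation matrices---from which the form $\alpha I_n + \beta J_n$ drops out almost immediately; the normalisation $\alpha + n\beta = 1$ then comes for free from the fact that $T$ is doubly quasi-stochastic.

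First I would recall that for every permutation $\sigma$ of $\{1,\dots,n\}$ the associated permutation matrix $P_\sigma$ is doubly stochastic, hence in particular doubly quasi-stochastic. By hypothesis $T$ commutes with $P_\sigma$, so $T = P_\sigma^{-1} T P_\sigma$. Writing this out entrywise, conjugation by a permutation matrix permutes the rows and columns of $T$ simultaneously, so the identity reads $T_{ij} = T_{\sigma(i)\,\sigma(j)}$ for all indices $i,j$ and all $\sigma$.

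Next I would use transitivity of the symmetric group on positions. Since $S_n$ acts transitively on $\{1,\dots,n\}$, all diagonal entries of $T$ are equal, say to a constant $p$; and (for $n\ge 2$) since $S_n$ acts transitively on ordered pairs of distinct indices, all off-diagonal entries of $T$ are equal, say to a constant $q$. The case $n=1$ is trivial since then $T=(1)=I_1=J_1$. Hence $T = (p-q)\,I_n + q\,J_n$, and we set $\alpha := p-q$, $\beta := q$. Finally I would impose that every row of $T$ sums to $1$: each row of $\alpha I_n + \beta J_n$ consists of one entry $\alpha+\beta$ and $n-1$ entries $\beta$, so its sum is $\alpha + n\beta$, and setting this equal to $1$ gives the asserted relation. (The column sums yield the same equation, consistently, and for such an $\alpha,\beta$ all of these sums are automatically $1$, so the resulting matrix is indeed doubly quasi-stochastic.)

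I do not expect a genuine obstacle here; the only points needing a little care are the bookkeeping of the conjugation convention $T_{ij}=T_{\sigma(i)\sigma(j)}$ and the degenerate case $n=1$. If a more structural argument is wanted, one can instead observe that the doubly quasi-stochastic matrices form the affine space $\tfrac1n J_n + V$, where $V$ is the linear space of real $n\times n$ matrices all of whose row and column sums vanish; subtracting two such matrices shows $T$ commutes with every element of $V$, and since restriction to the hyperplane $\mathbf 1^\perp$ (the orthogonal complement of the all-ones vector $\mathbf 1$) identifies $V$ with the full endomorphism algebra of $\mathbf 1^\perp$, a Schur-lemma argument forces $T$ to preserve the decomposition $\mathbb{R}^n = \mathbb{R}\mathbf 1 \oplus \mathbf 1^\perp$ and to act as a scalar on each summand---the scalar on $\mathbb{R}\mathbf 1$ being pinned to $1$ by $T\mathbf 1 = \mathbf 1$---which again gives $T = \alpha I_n + \beta J_n$ with $\alpha + n\beta = 1$. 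The permutation-matrix route is shorter and entirely elementary, so that is the one I would write up.
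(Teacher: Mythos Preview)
Your proof is correct. The permutation-matrix argument is sound: $S_n$ is $2$-transitive on $\{1,\dots,n\}$ for $n\ge 2$, so commutation with every $P_\sigma$ forces all diagonal entries of $T$ to agree and all off-diagonal entries to agree, giving $T=\alpha I_n+\beta J_n$, and the row-sum condition yields $\alpha+n\beta=1$.

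This is a genuinely different route from the paper's. The paper argues structurally: it sets $P=\tfrac1n J_n$, observes that the doubly quasi-stochastic matrices are exactly those commuting with $P$, and hence decompose as $PAP+(I_n-P)A(I_n-P)$; since the $(I_n-P)$-block ranges over all of $M_{n-1}(\mathbb{R})$, the centre-of-$M_{n-1}$ fact forces $(I_n-P)T(I_n-P)$ to be scalar, whence $T=\alpha I_n+\beta J_n$. In other words, the paper's proof is precisely the ``more structural'' Schur-type alternative you sketch in your final paragraph. Your primary argument is more elementary in that it only tests $T$ against the finite family of permutation matrices and avoids any appeal to the commutant of a full matrix algebra; the paper's argument, on the other hand, makes transparent \emph{why} the answer is $\mathrm{span}\{I_n,J_n\}$, namely because this is the centre of the algebra $\{A:[A,P]=0\}\cong \mathbb{R}\times M_{n-1}(\mathbb{R})$. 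Either would be perfectly acceptable here.
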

\begin{proof}
    All matrices will be $n\times n$ square matrices. We note that a matrix $A$ is quasi-stochastic if and only if $JA = J$. It is furthermore doubly quasi-stochastic if and only if $AJ = JA = J$. $J$ is obviously a rank 1 matrix, which has a single non-zero eigenvector: \textbf{1} $ = (1,\ldots,1)$, such that $J$\textbf{1} $ = n$\textbf{1}. We therefore see that $P = \frac{1}{n}J$ is a 1-dimensional projection, and that the linear subspace spanned by the doubly quasi-stochastic matrices is exactly $\{A ; [A,P] = 0 \}$. This commutation essentially fixes one of the eigenvalues of the matrices, but apart from that doesn't require anything extra. This space is therefore isomorphic to $\mathbb{R}\times M_{n-1}$. Now any doubly quasi-stochastic matrix $A$ can be written as $PAP + (I_n-P)A(I_n-P)$. Therefore for $T$ to commute with a doubly quasi-stochastic matrix $A$ we need $(I_n-P)T(I_n-P)$ to commute with $(I_n-P)A(I_n-P)$ (since it automatically commutes on the other part). But $(I_n-P)A(I_n-P)$ is just an arbitrary $(n-1)\times (n-1)$ matrix. We know that the only matrices in the space $M_k$ that commute with all the matrices are multiples of the identity. We therefore have $(I_n-P)T(I_n-P) = \alpha(I_n-P)$ (noting that $I_n-P$ acts as the identity on this space). And so in fact 
    $$T = \alpha(I_n-P) + \beta P = \alpha I_n + (\beta-\alpha) P = \alpha I_n + \frac{1}{n}(\beta-\alpha)J$$
    which with a proper relabelling becomes $T = \alpha I_n + \beta J$. $T$ is automatically symmetric, and it is quasi-stochastic exactly when $\alpha + n\beta = 1$.
\end{proof}

So for $F$ to be a dagger functor, the transition matrices have to be particularly simple: $T_n = \alpha_n I_n + \beta_n J_n$. Let $\{E_i\}$ be a minimal IC-POVM for $M_n$. Suppose its transition matrix has that form, then
$$
    T_{ij} = \tr(\frac{E_j}{\tr(E_j)}E_i) = \alpha \delta_{ij} + \beta.
$$
We note that the right-hand side is symmetric, so that the left-hand side has to be as well, which means that $\tr(E_j) = \frac{1}{n}$ by minimality. This POVM has a very specific symmetry property.
\begin{definition}
We say that a POVM $\{E_i\}$ is \emph{generalised symmetric informationally complete} (generalised SIC) for $M_n$ when it is minimal informationally complete and
$$
    \tr(E_iE_j) = \alpha\delta_{ij} + \beta
$$
for some constants $\alpha$ and $\beta$. It is called SIC (not generalised) when all the components are rank-1: $E_i = \frac{1}{n}\Pi_i$ for some projections $\Pi_i$. For such a SIC the constants need to be $\alpha = \frac{1}{n^2}$ and $\beta = \frac{n-1}{n^4}$.
\end{definition}
The existence and behaviour of such POVM's has been subject of intense study especially in the case of the rank 1 SIC's where the existence in arbitrary dimension is still not proven\cite{appleby2017qplex,scott2010symmetric}. The generalised SIC's have been classified\cite{gour2014construction} and there exist many of them for each dimension.

With the previous lemma and our observations about the functor preserving the dagger we get the following theorem.
\begin{theorem}\label{theor:dagger}
    Let $Q$ be the functor from Theorem \ref{theor:embed}. $F\circ Q: \CPM \rightarrow \Qstoch$ preserves the dagger if and only if all POVM's are generalised SIC.
\end{theorem}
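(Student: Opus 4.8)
The plan is to prove the two directions separately, in each case reducing to the two lemmas just established --- the one showing that an equal-trace family makes $Q\colon\CPM\to\Qstoch_T$ a dagger functor, and the one identifying the matrices that commute with all doubly quasi-stochastic matrices --- the only genuinely new ingredient being a genericity argument that promotes ``$T_n$ commutes with every $Q(\Phi)$'' to ``$T_n$ commutes with every doubly quasi-stochastic matrix''.

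For the direction from generalised SIC to dagger preservation, I would first sum $\tr(E_iE_j)=\alpha\delta_{ij}+\beta$ over $j$ to see that $\tr(E_i)$ is independent of $i$, hence equal to $\tfrac1n$ since the $n^2$ traces sum to $n$; feeding this back in gives $T_n = n\alpha\, I_{n^2} + n\beta\, J_{n^2}$. As observed just before the definition of generalised SIC, a matrix of the form $\alpha'I+\beta'J$ --- and its inverse, which has the same shape --- commutes with every doubly quasi-stochastic matrix. The equal-trace lemma makes $Q$ a dagger functor into $\Qstoch_T$, so $Q(\Phi^\dagger)=Q(\Phi)^T$ and $Q(\Phi)$ is doubly stochastic for unital $\Phi$; hence $F(Q(\Phi))=Q(\Phi)T_n^{-1}$ is doubly quasi-stochastic, and using that $T_n^{-1}$ is symmetric and commutes with every doubly stochastic matrix we get $(F(Q(\Phi)))^T = T_n^{-1}Q(\Phi)^T = Q(\Phi)^T T_n^{-1} = F(Q(\Phi^\dagger))$, so $F\circ Q$ preserves the dagger.

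For the converse, suppose $F\circ Q$ preserves the dagger; then in particular $F(Q(\Phi))$ must itself be doubly quasi-stochastic for every unital CPTP $\Phi$, since otherwise its transpose is not a morphism of $\Qstoch$. Applying this to the completely depolarising channel $\mathcal{D}(\rho)=\tr(\rho)\tfrac{I}{n}$: one has $Q(\mathcal{D})=t\,\mathbf{1}^T$ with $t_i=\tr(E_i^{(n)})/n$, and since $T_n^{-1}$ is quasi-stochastic $\mathbf{1}^T T_n^{-1}=\mathbf{1}^T$, so $F(Q(\mathcal{D}))=t\,\mathbf{1}^T$ as well; its row sums are $n^2 t_i$, and demanding they equal $1$ forces $\tr(E_i^{(n)})=\tfrac1n$ for all $i$. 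With equal traces $T_n$ is symmetric, the equal-trace lemma gives $Q(\Phi^\dagger)=Q(\Phi)^T$ for unital $\Phi$, and dagger preservation now reads $Q(\Phi)^T T_n^{-1}=(T_n^{-1})^T Q(\Phi)^T=T_n^{-1}Q(\Phi)^T$; equivalently, $T_n$ commutes with $Q(\Phi)$ for every unital CPTP map $\Phi\colon M_n\to M_n$.

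It remains to upgrade this to: $T_n$ commutes with every $n^2\times n^2$ doubly quasi-stochastic matrix. The assignment $\Phi\mapsto T_nQ(\Phi)-Q(\Phi)T_n$ is linear and vanishes on the convex set of unital CPTP maps, which contains $\mathcal{D}$ in its relative interior (its Choi matrix $\tfrac1n I_{n^2}$ is full rank), so it vanishes on the affine hull, namely on all Hermiticity-, trace- and unital-preserving linear maps $M_n\to M_n$. That space has dimension $(n^2-1)^2$; on it, $\Phi\mapsto Q(\Phi)$ is linear, injective (informational completeness forces $\Phi=\Psi$ whenever $Q(\Phi)=Q(\Psi)$), and takes values among the doubly quasi-stochastic $n^2\times n^2$ matrices, whose space also has dimension $(n^2-1)^2$ --- so it is onto them. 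Hence $T_n$ commutes with every doubly quasi-stochastic $n^2\times n^2$ matrix, the second lemma yields $T_n=\alpha I_{n^2}+\beta J_{n^2}$, and then $\tr(E_iE_j)=\tfrac1n T_{n,ij}=\tfrac{\alpha}{n}\delta_{ij}+\tfrac{\beta}{n}$, which is exactly the generalised SIC condition. I expect the main obstacle to be this last genericity step --- specifically, the dimension count establishing that $Q$ maps the space of unital, trace- and Hermiticity-preserving maps \emph{onto} the full space of doubly quasi-stochastic matrices of matching size, since that is what makes the hypotheses of the second lemma available.
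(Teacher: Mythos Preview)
Your argument is correct and follows the same route as the paper --- reduce both directions to the two preceding lemmas --- but it is considerably more careful than the paper's own two-sentence proof, which simply asserts ``$Q$ always preserves the dagger, so $F\circ Q$ only preserves the dagger when $F$ does'' and then cites the earlier discussion characterising when $F$ is a dagger functor.

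In particular you supply two steps the paper leaves implicit. First, in the converse direction the paper's phrase ``$Q$ always preserves the dagger'' presupposes that $T_n$ is symmetric (equivalently that the POVM has equal traces), which is not given a priori; your use of the completely depolarising channel to force $\tr(E_i^{(n)})=\tfrac1n$ from the mere requirement that $F(Q(\mathcal D))$ be doubly quasi-stochastic is exactly what is needed to justify this. Second, the paper's inference ``$F\circ Q$ preserves the dagger $\Rightarrow$ $F$ preserves the dagger'' tacitly assumes that the image of $Q$ on unital CPTP maps is rich enough to force $T_n$ to commute with \emph{all} doubly quasi-stochastic matrices, not just with those of the form $Q(\Phi)$. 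Your genericity step --- passing to the affine hull of unital CPTP maps (the Hermiticity-, trace- and unit-preserving maps, of real dimension $(n^2-1)^2$), observing that $Q$ is linear and injective by informational completeness, and matching dimensions with the doubly quasi-stochastic $n^2\times n^2$ matrices --- closes this gap cleanly. The dimension count is right: both affine spaces are cut out of an $n^4$-dimensional real space by two sets of $n^2$ constraints sharing one scalar redundancy, and $Q$ sends one onto the other.

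So: same approach as the paper, but your version actually proves the ``only if'' direction rather than gesturing at it.
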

\begin{proof}
    $Q$ always preserves the dagger, so $F\circ Q$ only preserves the dagger when $F$ does. We've already seen that $F$ preserves the dagger if and only if all the transition matrices have the special symmetric form corresponding to generalised SIC-POVMs.
\end{proof}

Recalling that the adjoint of a map can be interpreted as its time reversal, this gives an interpretation of SIC-POVMs as being the only POVMs `preserving the arrow of time' in the sense that the image of the time-reversal of a map is the time-reversal of its image. 

\section{General functorial embeddings}\label{sec:genrep}
So far we have only considered minimal IC-POVM's: A POVM the components of which form a basis for the underlying matrix space. However we can also consider POVM's consisting of more components. In that case the components won't be linearly independent so that a state can be represented in multiple ways as a combination of the POVM elements:
$$ \rho = \sum_i \alpha_i \frac{E_i}{\tr(E_i)} = \sum_i \alpha^\prime_i \frac{E_i}{\tr(E_i)}.$$
This means that 
$$
\tr(\rho E_i) = \sum_j \alpha_j \tr(\frac{E_j}{\tr(E_j)}E_i) = \sum_j T(i\lvert j)\alpha_j = \sum_j T(i\lvert j)\alpha^\prime_j
$$
or as vectors: $p = T\alpha = T\alpha^\prime$. This is the case because $T$ is no longer a full rank matrix so that it has a nontrivial kernel. Now we can consider the generalised inverse of $T$ defined as having the same kernel as $T$ but otherwise acting as an inverse which gives a canonical choice of $\alpha$: $\alpha = T^{-1}p$. 

We can also still transfer composition: $\Phi(\rho) \mapsto Q(\Phi)T^{-1}Q(\rho)$. This is well-defined in the sense that ker$(T)\subset $ ker$(Q(\Phi))$, but we can no longer define a category $\Qstoch_T$ as there is no longer an identity morphism: $A*T\neq A$, because $T^{-1}T \neq I_n$. The composite functor $F\circ Q$ where we map $\Phi \mapsto Q(\Phi)T^{-1}$ also doesn't work any more as we still don't map the identity to the identity. However, we could consider that there are some smart choices which do produce a valid functor. For instance there is a priori no reason that we can't let $id\mapsto I$. To preserve the convex structure we should then also change where the other maps are sent to. 

Suppose we have such a functor, how much of what we previously constructed still goes through? Looking at Theorem \ref{theor:nateq} we see that we no longer get a natural isomorphism as the matrices involved are no longer invertible. We also see that the proof of Theorem \ref{theor:monoid} explicitly requires the minimality of the representation since otherwise the morphisms involved wouldn't be isomorphisms. The most a nonminimal representation can therefore be is lax monoidal. There are no obvious barriers to a nonminimal representation preserving the adjoint.

So far we have studied embeddings of \CPM into \Qstoch coming from POVMs. Let's consider a bit more general view.

\begin{definition}
    We call $F: \CPM \rightarrow \Qstoch$ a \emph{quasi-stochastic representation} of \CPM when $F$ is a functor that preserves the convex structure of the categories: $F(t\Phi_1 + (1-t)\Phi_2) = tF(\Phi_1) + (1-t)F(\Phi_2)$ and maps $1$ to $1$ (so that states are sent to states). We call it a \emph{positive representation} when it sends all states to nonnegative distributions. A representation is called \emph{strong monoidal / faithful / dagger preserving} when the functor is.
    We call a representation \emph{trivial} when for all $n\in \mathbb{N}$ $F(\rho)=F(\sigma)$ for all states $\rho,\sigma\in M_n$.
\end{definition}

\begin{lemma}
    Let $F:\CPM \rightarrow \Qstoch$ be a quasi-stochastic representation. Then we can find for every $n\in\mathbb{N}$ a quasi-POVM $\{E_i^{(n)}\}$ such that for a state $\rho \in M_n$ we have $F(\rho)_i = \tr(\rho E_i^{(n)})$.
\end{lemma}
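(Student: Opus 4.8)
The plan is to exploit the fact that the convexity axiom forces $F$, restricted to states, to be (the restriction of) an affine map, together with the elementary fact that affine functions on the space of trace-one Hermitian matrices are exactly the maps $\rho \mapsto \tr(\rho E)$ with $E$ Hermitian.

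First I would record what $F$ does on states. Since $F$ is a functor with $F(1)=1$, it sends $n$ to some natural number $d_n := F(n)$ and sends a state $\tilde\rho : 1 \to n$ to a morphism $1 \to d_n$ in \Qstoch, i.e.\ to a quasi-probability distribution $F(\rho) \in \mathbb{R}^{d_n}$ with $\sum_{i=1}^{d_n} F(\rho)_i = 1$. The convexity hypothesis $F(t\rho_1 + (1-t)\rho_2) = t F(\rho_1) + (1-t) F(\rho_2)$ for $t\in[0,1]$ says precisely that each coordinate $\rho \mapsto F(\rho)_i$ is an affine function on the convex state space $\mathcal{S}_n = \{\rho \in M_n : \rho \geq 0,\ \tr\rho = 1\}$.

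Next I would invoke the geometry of $\mathcal{S}_n$. The pure states $\ket{\psi}\bra{\psi}$ linearly span the real vector space $\mathrm{Herm}(M_n)$ of Hermitian matrices (a standard polarisation-type computation), and they all lie in the affine hyperplane $\mathcal{H}_n = \{A \in \mathrm{Herm}(M_n) : \tr A = 1\}$; hence the affine hull of $\mathcal{S}_n$ is all of $\mathcal{H}_n$, and $\mathcal{S}_n$ is a convex body with nonempty relative interior in $\mathcal{H}_n$ (for instance $I_n/n$ is an interior point). Therefore each affine coordinate function of $F$ extends uniquely to an affine function on $\mathcal{H}_n$. I then use the routine linear-algebra fact that every affine function $g$ on $\mathcal{H}_n$ has the form $g(\rho) = \tr(\rho E)$ for a unique Hermitian $E$: writing $\rho = I_n/n + X$ with $X$ traceless Hermitian, an affine $g$ is $g(\rho) = g(I_n/n) + \varphi(X)$ with $\varphi$ linear, and $\varphi(X) = \tr(XH)$ for some Hermitian $H$ by nondegeneracy of the Hilbert–Schmidt form, after which one checks that $E = H + \bigl(g(I_n/n) - \tfrac{1}{n}\tr(H)\bigr) I_n$ does the job. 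Applying this to each coordinate produces Hermitian matrices $E_i^{(n)}$ with $F(\rho)_i = \tr(\rho E_i^{(n)})$ for all $\rho\in\mathcal{S}_n$.

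Finally I would pin down the normalisation: since $F(\rho)$ is a quasi-probability distribution, $\sum_i \tr(\rho E_i^{(n)}) = \tr\bigl(\rho \sum_i E_i^{(n)}\bigr) = 1 = \tr(\rho I_n)$ for every $\rho\in\mathcal{S}_n$, and because $\mathcal{S}_n$ affinely spans $\mathcal{H}_n$ while the Hilbert–Schmidt form is nondegenerate on $\mathrm{Herm}(M_n)$, this forces $\sum_i E_i^{(n)} = I_n$, so $\{E_i^{(n)}\}$ is a quasi-POVM. I expect the only mildly delicate point to be the passage ``affine function on the trace-one hyperplane $\Rightarrow$ $\tr(\rho E)$'', together with verifying that $\mathcal{S}_n$ is full-dimensional in $\mathcal{H}_n$ so that the extension of each coordinate is unambiguous; the rest is bookkeeping with the convexity axiom, and notably functoriality itself is barely used beyond $F(1)=1$.
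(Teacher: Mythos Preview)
Your proposal is correct and follows essentially the same route as the paper: restrict $F$ to states to get affine coordinate functions, extend to linear functionals on the Hermitian matrices, represent each via the Hilbert--Schmidt inner product as $\tr(\rho E_i^{(n)})$ with $E_i^{(n)}$ Hermitian, and read off $\sum_i E_i^{(n)} = I_n$ from the normalisation. Your write-up is in fact more explicit than the paper's (which hides the extension step behind ``using standard methods'' and passes through the dual of $M_n$), but the underlying argument is the same.
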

\begin{proof}
Let $n\in \mathbb{N}$. We see that $F$ restricts to a convex map $f^{(n)}:$ DO$(n) \rightarrow $ QProb$(F(n))$ where DO$(n)\subset M_n$ represents the set of density operators and QProb$(k_n)$ is the set of quasi-probability distributions on $F(n)$ points. We can split this map up into its components $f^{(n)}_i:$ DO$(n) \rightarrow \mathbb{R}$. These functions are again convex and using standard methods we can extend these first to linear maps on Hermitian matrices and then to linear maps to all matrices: $f^{(n)}_i: M_n\rightarrow \mathbb{C}$. This is exactly an element of the dual space of $M_n$ and since this is finite dimensional it is isomorphic to $M_n$ which means there exists a $E_i^{(n)}\in M_n$ such that $f^{(n)}(A) = \langle A, E_i^{(n)} \rangle_{HS} = \tr(A (E_i^{(n)})^\dagger)$. Because this function is real valued on the Hermitian matrices, $E_i^{(n)}$ has to be Hermitian and because we have $1 = \sum_i f_i^{(n)}(\rho) = \tr(\rho \sum_i E_i^{(n)})$ we get $\sum_i E_i^{(n)} = I_n$ proving that $\{E_i^{(n)}\}$ is indeed a quasi-POVM.
\end{proof}
We will refer to these quasi-POVM's as the representation's associated quasi-POVM's. It is easy to see that the representation is positive if and only if the associated quasi-POVM's are true POVM's (consisting of only positive components).

\textbf{Note:} If we start out with a family of POVM's and construct the regular functor $Q: \CPM \rightarrow \Qstoch_T$ and then append this with the functor $F^\prime: \Qstoch_T\rightarrow \Qstoch$ given by $F^\prime(A) = T^{-1}A$ then $F^\prime\circ Q$ will not be a positive representation, while $F \circ Q$ where $F(A) = AT^{-1}$ will be. $F^\prime\circ Q$ and $F\circ Q$ are naturally isomorphic, so positivity of the representation is not a `categorical' property in the sense that it isn't preserved by natural isomorphism.

It is not hard to see that a representation is faithful if and only if the associated quasi-POVM's are informationally complete. On the other hand, suppose a representation is trivial, then for a element of the quasi-POVM $E\in M_n$ we must have $\tr(\rho E) = \tr(\sigma E)$ for all states $\rho,\sigma\in M_n$. Calling this value $\alpha$ and realising that $\alpha = \tr(\rho \alpha I_n)$ we see that $\tr(\rho (E-\alpha I_n))=0$ for all states $\rho$ so that $E= \alpha I_n$. Perhaps somewhat surprisingly these are the only options for the associated POVM's:

\begin{theorem} \label{theor:trivialrep}
    Let $F: \CPM \rightarrow \Qstoch$ be a quasi-stochastic representation of $\CPM$. This representation is either faithful or trivial.
\end{theorem}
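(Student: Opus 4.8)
The plan is to reason entirely through the associated quasi-POVMs $\{E_i^{(n)}\}$ given by the previous lemma, and to track, for each $n$, the real subspace $V_n:=\operatorname{span}_{\mathbb R}\{E_i^{(n)}\}$ of the space $H_n$ of Hermitian matrices in $M_n$. Since the $E_i^{(n)}$ are Hermitian and sum to $I_n$ we have $I_n\in V_n\subseteq H_n$. The discussion just before the theorem already pins down the two extremes: the representation is \emph{faithful} exactly when $V_n=H_n$ for every $n$ (the quasi-POVMs are informationally complete), and \emph{trivial} exactly when $V_n=\mathbb R I_n$ for every $n$ (every component is a multiple of $I_n$). So it is enough to establish (i) a levelwise dichotomy --- for each $n$ either $V_n=\mathbb R I_n$ or $V_n=H_n$ --- and (ii) that if $V_{n_0}=H_{n_0}$ for even a single $n_0$ (which then has $n_0\ge 2$), then $V_n=H_n$ for all $n$.

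The only thing I would extract from functoriality is a transport rule for the $V_n$. For any morphism $\Phi:M_a\to M_b$ of $\CPM$ and any state $\rho\in M_a$, I would compute $F(\Phi(\rho))_j$ in two ways: as $\tr(\Phi(\rho)E_j^{(b)})=\tr(\rho\,\Phi^\dagger(E_j^{(b)}))$, and as $\big(F(\Phi)F(\rho)\big)_j=\sum_i F(\Phi)_{ji}\tr(\rho E_i^{(a)})$. Because (pure) states span $H_a$, this forces the Hermitian matrix $\Phi^\dagger(E_j^{(b)})$ to equal $\sum_i F(\Phi)_{ji}E_i^{(a)}\in V_a$ for every $j$; equivalently $\Phi^\dagger(V_b)\subseteq V_a$. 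Specializing to $a=b=n$ and $\Phi=$ conjugation by an arbitrary unitary $U$ (a CPTP map whose HS-adjoint is conjugation by $U^\dagger$) shows $V_n$ is stable under conjugation by all unitaries. If $V_n\ne\mathbb R I_n$, pick $X\in V_n$ not a scalar; then its traceless part $X-\tfrac1n\tr(X)I_n$ is a nonzero element of the conjugation-stable subspace $V_n\cap H_n^{0}$, where $H_n^{0}$ denotes the traceless Hermitians; by irreducibility of the conjugation action on $H_n^{0}$ this subspace is all of $H_n^{0}$, so $V_n\supseteq\mathbb R I_n+H_n^{0}=H_n$. That proves (i).

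For (ii), fix $n_0\ge 2$ with $V_{n_0}=H_{n_0}$. If $n\le n_0$, choose an isometry $W:\mathbb C^n\to\mathbb C^{n_0}$; then $\Phi(X)=WXW^\dagger$ is a CPTP map $M_n\to M_{n_0}$ whose HS-adjoint $\Phi^\dagger(Y)=W^\dagger YW$ maps $H_{n_0}$ \emph{onto} $H_n$, so the transport rule gives $H_n=\Phi^\dagger(V_{n_0})\subseteq V_n$. If $n\ge n_0$, choose an isometry $W:\mathbb C^{n_0}\to\mathbb C^n$, put $P=WW^\dagger$, fix any state $\tau\in M_{n_0}$, and consider the CPTP maps $\Phi(X)=WXW^\dagger:M_{n_0}\to M_n$ and $\Phi'(Y)=W^\dagger YW+\tr\!\big((I_n-P)Y\big)\tau:M_n\to M_{n_0}$, which satisfy $\Phi'\circ\Phi=\mathrm{id}_{M_{n_0}}$. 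Applying the functor, $F(\Phi')F(\Phi)=F(\mathrm{id}_{M_{n_0}})=I$, so $F(\Phi)$ is injective as a linear map; since $V_{n_0}=H_{n_0}\supsetneq\mathbb R I_{n_0}$ there are states $\rho\ne\sigma$ in $M_{n_0}$ with $F(\rho)\ne F(\sigma)$, whence $F(\Phi(\rho))=F(\Phi)F(\rho)\ne F(\Phi)F(\sigma)=F(\Phi(\sigma))$ while $\Phi(\rho)\ne\Phi(\sigma)$ are states of $M_n$, so $V_n\ne\mathbb R I_n$ and (i) upgrades this to $V_n=H_n$. Together (i) and (ii) say a nontrivial representation has $V_n=H_n$ for all $n$, i.e.\ is faithful.

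The step I expect to be the real obstacle is the irreducibility used in (i): one must be careful that the conjugation action of the unitaries on the traceless Hermitian matrices $H_n^{0}$ is irreducible \emph{over $\mathbb R$}, not just over $\mathbb C$. This is true because that action is the adjoint representation of the simple compact group $\mathrm{SU}(n)$; alternatively, to keep the argument self-contained, one can conjugate $\operatorname{diag}(1,-1,0,\dots,0)$ by $2\times 2$ real rotations and by diagonal phases to generate every off-diagonal generator, combine this with the easy irreducibility of the standard permutation representation of the symmetric group on the diagonal traceless matrices, and conclude by hand. The remaining points --- that the maps written above are genuinely CPTP, that $\Phi'\circ\Phi=\mathrm{id}$, and that a functor into $\Qstoch$ sends identities to identity matrices and composition to matrix multiplication --- are routine.
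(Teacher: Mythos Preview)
Your proof is correct and follows the same underlying strategy as the paper: show each $V_n$ is invariant under unitary conjugation, use irreducibility of that action on traceless Hermitians to get the levelwise dichotomy, and then propagate across dimensions via well-chosen CPTP maps. The differences are in packaging rather than substance. The paper works dually with $V_n^\perp$ and argues it must be closed under all CPTP maps; your transport rule $\Phi^\dagger(V_b)\subseteq V_a$ is the adjoint formulation of the same fact and is arguably cleaner, since a single identity drives both the levelwise and the cross-dimensional steps. For the levelwise step the paper proves a concrete lemma (starting from one non-scalar Hermitian and the identity, build all diagonal matrices by hand via conjugations and linear combinations) rather than citing irreducibility of the adjoint representation; the self-contained sketch you give at the end essentially reproduces that lemma. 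For the spreading step the paper runs the contrapositive (trivial at one $n$ forces trivial everywhere, via partial isometries upward and then a CPTP surjection down to $M_2$), whereas you argue directly (faithful at one $n_0$ forces faithful everywhere), using a one-sided inverse $\Phi'\circ\Phi=\mathrm{id}$ to make $F(\Phi)$ injective. Both routes work; your section argument is a little slicker in that it handles the downward direction in one stroke instead of needing a separate descent to $M_2$.
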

\begin{proof}
    See appendix \ref{app:trivialrep}.
\end{proof}

This means that we immediately get the following corollary using Theorems \ref{theor:monoid} and \ref{theor:dagger}.

\begin{corollary}
    Let $F: \CPM \rightarrow \Qstoch$ be a positive nontrivial representation. 
    \begin{itemize}
        \item $F$ is strong monoidal if and only if the associated POVM's are minimal IC.
        \item $F$ preserves the dagger if and only if the associated POVM's are generally symmetric IC.
    \end{itemize}
\end{corollary}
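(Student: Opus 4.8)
The plan is to assemble the statement from the structural results already in hand. By Theorem~\ref{theor:trivialrep} a nontrivial quasi-stochastic representation $F$ is faithful, hence (by the remark following the associated-quasi-POVM lemma) its associated quasi-POVM's $\{E_i^{(n)}\}$ span each $M_n$, i.e.\ are informationally complete; positivity of $F$ upgrades them to genuine IC-POVM's. It therefore remains to characterise, among families of IC-POVM's, which ones make $F$ strong monoidal (the claim: the minimal ones) and which make $F$ dagger-preserving (the claim: the generalised SIC ones).

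For the ``if'' directions, assume the associated POVM's are minimal. Then each $\{E_i^{(n)}\}$ is a basis of $M_n$, so $\rho\mapsto F(\tilde\rho)=(\tr(\rho E_i^{(n)}))_i$ carries the density operators onto the affine hyperplane $\{\sum_i x_i=1\}$ of $\mathbb{R}^{n^2}=\mathbb{R}^{F(n)}$, and this hyperplane linearly spans $\mathbb{R}^{F(n)}$. Since a convex-structure-preserving functor is determined on every morphism by its action on states --- $F(\Phi)$ is the unique matrix solving $F(\Phi)\,F(\tilde\rho)=F(\widetilde{\Phi(\rho)})$ once the vectors $F(\tilde\rho)$ span --- the functor $F$ must coincide with $F_T\circ Q$, where $Q$ is the functor of Theorem~\ref{theor:embed} associated to this family and $F_T(A)=AT^{-1}$ (the two agree on states because $T_1=1$). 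Theorem~\ref{theor:monoid} then yields that $F$ is strong monoidal, and if the POVM's are moreover generalised SIC --- in particular minimal --- Theorem~\ref{theor:dagger} yields that $F$ preserves the dagger.

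For the ``only if'' directions one exploits naturality. If $F$ is strong monoidal with coherence isomorphisms $\alpha_{n,m}\colon F(n)F(m)\to F(nm)$, the coherence condition with an identity leg forces $\alpha_{1,1}=\mathrm{id}$, and the naturality square applied to a pair of state morphisms gives $\alpha_{n,m}\bigl(F(\tilde\rho)\otimes F(\tilde\sigma)\bigr)=F(\widetilde{\rho\otimes\sigma})$; reading off entries and using that product states span $M_{nm}$ yields $E_l^{(nm)}=\sum_{j,k}(\alpha_{n,m})_{l,(jk)}\,E_j^{(n)}\otimes E_k^{(m)}$, and invertibility of $\alpha_{n,m}$ must then force each $\{E_i^{(n)}\}$ to be linearly independent, i.e.\ minimal. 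If instead $F$ preserves the dagger, then for every unital CPTP $\Phi\colon M_n\to M_n$ the identity $F(\Phi^\dagger)=F(\Phi)^T$ together with the state-level relation forces the operator identity $\Phi^\dagger(E_i^{(n)})=\sum_j F(\Phi)_{ij}E_j^{(n)}$; pairing this with $E_k^{(n)}$ in the Hilbert--Schmidt inner product (and doing the same for $\Phi=(\Phi^\dagger)^\dagger$) shows that the Gram matrix $G_{ij}=\tr(E_i^{(n)}E_j^{(n)})$ commutes with every $F(\Phi)$ and with the all-ones matrix $J$, so that the lemma above --- a matrix commuting with all doubly quasi-stochastic matrices has the form $\alpha I+\beta J$, which is singular only when $\alpha=0$, whereas a rank-one Gram matrix is impossible for $n>1$ --- gives both minimality and $\tr(E_iE_j)=\alpha\delta_{ij}+\beta$, i.e.\ the generalised SIC property. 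The corollary is the conjunction of these equivalences. I expect the real work to be in the two places where inflated, non-minimal IC-POVM's must be excluded: deducing linear independence of each $\{E_i^{(n)}\}$ from $E_l^{(nm)}=\sum(\alpha_{n,m})_{l,(jk)}E_j^{(n)}\otimes E_k^{(m)}$ with $\alpha_{n,m}$ invertible (a bare dimension count stays consistent with non-minimality, so one likely has to iterate the coherence relations over all $n$ and $m$ or re-run the appendix argument of Theorem~\ref{theor:monoid}), and checking in the dagger case that the images $F(\Phi)$ of unital channels generate enough doubly quasi-stochastic matrices for that lemma to apply.
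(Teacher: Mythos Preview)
The paper's own proof is the single sentence ``This means that we immediately get the following corollary using Theorems~\ref{theor:monoid} and~\ref{theor:dagger}.'' The intended reading is: Theorem~\ref{theor:trivialrep} plus positivity give genuine IC-POVM's, and one then invokes Theorem~\ref{theor:monoid} (together with the remark following it that non-minimal POVM's yield at most a lax monoidal functor) and the iff of Theorem~\ref{theor:dagger}. Your treatment is considerably more detailed than this and follows the same overall route.

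Your ``if'' directions are correct and supply a step the paper leaves implicit: once the associated POVM's are minimal, the vectors $F(\tilde\rho)$ linearly span $\mathbb{R}^{F(n)}=\mathbb{R}^{n^2}$, so the equations $F(\Phi)F(\tilde\rho)=F(\widetilde{\Phi(\rho)})$ determine $F(\Phi)$ uniquely and force $F=F_T\circ Q$. That identification is exactly what licenses citing Theorems~\ref{theor:monoid} and~\ref{theor:dagger} for a general $F$, and you make it explicit where the paper does not.

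For the ``only if'' directions you flag gaps, and they are genuine. In the strong-monoidal case, invertibility of $\alpha_{n,m}$ only yields $F(nm)=F(n)F(m)$, which together with $F(n)\geq n^2$ does not force equality (any multiplicative $F$ with $F(p)>p^2$ at some prime is consistent with a bare dimension count); something further is needed to exclude non-minimal IC-POVM's. In the dagger case your derivation that the Gram matrix $G_{ij}=\tr(E_iE_j)$ commutes with every $F(\Phi)$ for unital $\Phi$ is correct, but the lemma you want to invoke requires commutation with \emph{all} doubly quasi-stochastic matrices, and the images $F(\Phi)$ of unital channels form a proper subset whose richness depends on whether $F(n)=n^2$ --- the very thing at issue. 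The paper does not close these gaps either: both the post-Theorem~\ref{theor:monoid} remark and the iff in Theorem~\ref{theor:dagger} are stated for the specific functor $F_T\circ Q$, not for an arbitrary convex $F$, so the passage from those theorems to the corollary for general $F$ is as sketchy in the paper as in your proposal. Your honest accounting of where work remains is accurate.
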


The above theorem and corollary show that any nontrivial representation must be induced by a family of informationally complete (quasi-)POVMs. It is not clear whether these representations must also be minimal. It could very well be that the proof of Theorem \ref{theor:trivialrep} can be adapted to show that any representation must necessarily be minimal to preserve functoriality. If this is the case then we automatically get that the object $n$ in $\CPM$ is mapped to $n^2$ in $\Qstoch$ providing a natural way of viewing the difference between the amount of perfectly distinguishable states ($n$) versus the dimension of the state space ($n^2$). This specific relation between these values is identified as something particular to quantum theory in for instance \cite{chiribella2011informational,appleby2011properties,hardy2001quantum}. Real-valued (or quaternion valued) quantum theory would have a different value here.

\bibliographystyle{eptcs}
\bibliography{main}

\begin{thebibliography}{10}
\providecommand{\bibitemdeclare}[2]{}
\providecommand{\surnamestart}{}
\providecommand{\surnameend}{}
\providecommand{\urlprefix}{Available at }
\providecommand{\url}[1]{\texttt{#1}}
\providecommand{\href}[2]{\texttt{#2}}
\providecommand{\urlalt}[2]{\href{#1}{#2}}
\providecommand{\doi}[1]{doi:\urlalt{http://dx.doi.org/#1}{#1}}
\providecommand{\bibinfo}[2]{#2}

\bibitemdeclare{article}{abramsky2011sheaf}
\bibitem{abramsky2011sheaf}
\bibinfo{author}{Samson \surnamestart Abramsky\surnameend} \&
  \bibinfo{author}{Adam \surnamestart Brandenburger\surnameend}
  (\bibinfo{year}{2011}): \emph{\bibinfo{title}{The sheaf-theoretic structure
  of non-locality and contextuality}}.
\newblock {\sl \bibinfo{journal}{New Journal of Physics}}
  \bibinfo{volume}{13}(\bibinfo{number}{11}), p. \bibinfo{pages}{113036},
  \doi{10.1088/1367-2630/13/11/113036}.

\bibitemdeclare{incollection}{abramsky2014operational}
\bibitem{abramsky2014operational}
\bibinfo{author}{Samson \surnamestart Abramsky\surnameend} \&
  \bibinfo{author}{Adam \surnamestart Brandenburger\surnameend}
  (\bibinfo{year}{2014}): \emph{\bibinfo{title}{An operational interpretation
  of negative probabilities and no-signalling models}}.
\newblock In: {\sl \bibinfo{booktitle}{Horizons of the Mind. A Tribute to
  Prakash Panangaden}}, \bibinfo{publisher}{Springer}, pp.
  \bibinfo{pages}{59--75}, \doi{10.1103/PhysRev.40.749}.

\bibitemdeclare{article}{appleby2011properties}
\bibitem{appleby2011properties}
\bibinfo{author}{DM~\surnamestart Appleby\surnameend}, \bibinfo{author}{{\AA}sa
  \surnamestart Ericsson\surnameend} \& \bibinfo{author}{Christopher~A
  \surnamestart Fuchs\surnameend} (\bibinfo{year}{2011}):
  \emph{\bibinfo{title}{Properties of QBist state spaces}}.
\newblock {\sl \bibinfo{journal}{Foundations of Physics}}
  \bibinfo{volume}{41}(\bibinfo{number}{3}), pp. \bibinfo{pages}{564--579},
  \doi{10.1007/s10701-010-9458-7}.

\bibitemdeclare{article}{appleby2017qplex}
\bibitem{appleby2017qplex}
\bibinfo{author}{Marcus \surnamestart Appleby\surnameend},
  \bibinfo{author}{Christopher~A \surnamestart Fuchs\surnameend},
  \bibinfo{author}{Blake~C \surnamestart Stacey\surnameend} \&
  \bibinfo{author}{Huangjun \surnamestart Zhu\surnameend}
  (\bibinfo{year}{2017}): \emph{\bibinfo{title}{Introducing the Qplex: a novel
  arena for quantum theory}}.
\newblock {\sl \bibinfo{journal}{The European Physical Journal D}}
  \bibinfo{volume}{71}(\bibinfo{number}{7}), p. \bibinfo{pages}{197},
  \doi{10.1140/epjd/e2017-80024-y}.

\bibitemdeclare{article}{barrett2017localstoch}
\bibitem{barrett2017localstoch}
\bibinfo{author}{Jonathan \surnamestart Barrett\surnameend},
  \bibinfo{author}{Niel \surnamestart de~Beaudrap\surnameend},
  \bibinfo{author}{Matty~J \surnamestart Hoban\surnameend} \&
  \bibinfo{author}{Ciar{\'a}n~M \surnamestart Lee\surnameend}
  (\bibinfo{year}{2017}): \emph{\bibinfo{title}{The computational landscape of
  general physical theories}}.
\newblock {\sl \bibinfo{journal}{arXiv preprint arXiv:1702.08483}}.

\bibitemdeclare{article}{chiribella2011informational}
\bibitem{chiribella2011informational}
\bibinfo{author}{Giulio \surnamestart Chiribella\surnameend},
  \bibinfo{author}{Giacomo~Mauro \surnamestart D’Ariano\surnameend} \&
  \bibinfo{author}{Paolo \surnamestart Perinotti\surnameend}
  (\bibinfo{year}{2011}): \emph{\bibinfo{title}{Informational derivation of
  quantum theory}}.
\newblock {\sl \bibinfo{journal}{Physical Review A}}
  \bibinfo{volume}{84}(\bibinfo{number}{1}), p. \bibinfo{pages}{012311},
  \doi{10.1103/PhysRevLett.103.170502}.

\bibitemdeclare{article}{ferrie2011quasi}
\bibitem{ferrie2011quasi}
\bibinfo{author}{Christopher \surnamestart Ferrie\surnameend}
  (\bibinfo{year}{2011}): \emph{\bibinfo{title}{Quasi-probability
  representations of quantum theory with applications to quantum information
  science}}.
\newblock {\sl \bibinfo{journal}{Reports on Progress in Physics}}
  \bibinfo{volume}{74}(\bibinfo{number}{11}), p. \bibinfo{pages}{116001},
  \doi{10.1088/0034-4885/74/11/116001}.

\bibitemdeclare{article}{ferrie2008frame}
\bibitem{ferrie2008frame}
\bibinfo{author}{Christopher \surnamestart Ferrie\surnameend} \&
  \bibinfo{author}{Joseph \surnamestart Emerson\surnameend}
  (\bibinfo{year}{2008}): \emph{\bibinfo{title}{Frame representations of
  quantum mechanics and the necessity of negativity in quasi-probability
  representations}}.
\newblock {\sl \bibinfo{journal}{Journal of Physics A: Mathematical and
  Theoretical}} \bibinfo{volume}{41}(\bibinfo{number}{35}), p.
  \bibinfo{pages}{352001}, \doi{10.1088/1751-8113/41/35/352001}.

\bibitemdeclare{article}{feynman1982simulating}
\bibitem{feynman1982simulating}
\bibinfo{author}{Richard~P \surnamestart Feynman\surnameend}
  (\bibinfo{year}{1982}): \emph{\bibinfo{title}{Simulating physics with
  computers}}.
\newblock {\sl \bibinfo{journal}{International journal of theoretical physics}}
  \bibinfo{volume}{21}(\bibinfo{number}{6}), pp. \bibinfo{pages}{467--488},
  \doi{10.1007/BF02650179}.

\bibitemdeclare{article}{fuchs2017notwithstanding}
\bibitem{fuchs2017notwithstanding}
\bibinfo{author}{Christopher~A \surnamestart Fuchs\surnameend}
  (\bibinfo{year}{2017}): \emph{\bibinfo{title}{Notwithstanding Bohr, the
  Reasons for QBism}}.
\newblock {\sl \bibinfo{journal}{arXiv preprint arXiv:1705.03483}}.

\bibitemdeclare{article}{fuchs2013quantum}
\bibitem{fuchs2013quantum}
\bibinfo{author}{Christopher~A \surnamestart Fuchs\surnameend} \&
  \bibinfo{author}{R{\"u}diger \surnamestart Schack\surnameend}
  (\bibinfo{year}{2013}): \emph{\bibinfo{title}{Quantum-bayesian coherence}}.
\newblock {\sl \bibinfo{journal}{Reviews of Modern Physics}}
  \bibinfo{volume}{85}(\bibinfo{number}{4}), p. \bibinfo{pages}{1693},
  \doi{10.1016/j.shpsb.2005.05.005}.

\bibitemdeclare{article}{gour2014construction}
\bibitem{gour2014construction}
\bibinfo{author}{Gilad \surnamestart Gour\surnameend} \& \bibinfo{author}{Amir
  \surnamestart Kalev\surnameend} (\bibinfo{year}{2014}):
  \emph{\bibinfo{title}{Construction of all general symmetric informationally
  complete measurements}}.
\newblock {\sl \bibinfo{journal}{Journal of Physics A: Mathematical and
  Theoretical}} \bibinfo{volume}{47}(\bibinfo{number}{33}), p.
  \bibinfo{pages}{335302}, \doi{10.1088/1751-8113/47/33/335302}.

\bibitemdeclare{article}{hardy2001quantum}
\bibitem{hardy2001quantum}
\bibinfo{author}{Lucien \surnamestart Hardy\surnameend} (\bibinfo{year}{2001}):
  \emph{\bibinfo{title}{Quantum theory from five reasonable axioms}}.
\newblock {\sl \bibinfo{journal}{arXiv preprint quant-ph/0101012}}.

\bibitemdeclare{article}{hardy2013formalism}
\bibitem{hardy2013formalism}
\bibinfo{author}{Lucien \surnamestart Hardy\surnameend} (\bibinfo{year}{2013}):
  \emph{\bibinfo{title}{A formalism-local framework for general probabilistic
  theories, including quantum theory}}.
\newblock {\sl \bibinfo{journal}{Mathematical Structures in Computer Science}}
  \bibinfo{volume}{23}(\bibinfo{number}{02}), pp. \bibinfo{pages}{399--440},
  \doi{10.1007/978-3-642-70029-3}.

\bibitemdeclare{article}{lee2015computation}
\bibitem{lee2015computation}
\bibinfo{author}{Ciar{\'a}n~M \surnamestart Lee\surnameend} \&
  \bibinfo{author}{Jonathan \surnamestart Barrett\surnameend}
  (\bibinfo{year}{2015}): \emph{\bibinfo{title}{Computation in generalised
  probabilisitic theories}}.
\newblock {\sl \bibinfo{journal}{New Journal of Physics}}
  \bibinfo{volume}{17}(\bibinfo{number}{8}), p. \bibinfo{pages}{083001},
  \doi{10.1088/1367-2630/17/8/083001}.

\bibitemdeclare{article}{mari2012positive}
\bibitem{mari2012positive}
\bibinfo{author}{A~\surnamestart Mari\surnameend} \& \bibinfo{author}{Jens
  \surnamestart Eisert\surnameend} (\bibinfo{year}{2012}):
  \emph{\bibinfo{title}{Positive Wigner functions render classical simulation
  of quantum computation efficient}}.
\newblock {\sl \bibinfo{journal}{Physical review letters}}
  \bibinfo{volume}{109}(\bibinfo{number}{23}), p. \bibinfo{pages}{230503},
  \doi{10.1103/PhysRevA.79.062302}.

\bibitemdeclare{article}{pashayan2015estimating}
\bibitem{pashayan2015estimating}
\bibinfo{author}{Hakop \surnamestart Pashayan\surnameend},
  \bibinfo{author}{Joel~J \surnamestart Wallman\surnameend} \&
  \bibinfo{author}{Stephen~D \surnamestart Bartlett\surnameend}
  (\bibinfo{year}{2015}): \emph{\bibinfo{title}{Estimating outcome
  probabilities of quantum circuits using quasiprobabilities}}.
\newblock {\sl \bibinfo{journal}{Physical review letters}}
  \bibinfo{volume}{115}(\bibinfo{number}{7}), p. \bibinfo{pages}{070501},
  \doi{10.1088/1464-4266/6/10/003}.

\bibitemdeclare{article}{scott2010symmetric}
\bibitem{scott2010symmetric}
\bibinfo{author}{Andrew~James \surnamestart Scott\surnameend} \&
  \bibinfo{author}{Markus \surnamestart Grassl\surnameend}
  (\bibinfo{year}{2010}): \emph{\bibinfo{title}{Symmetric informationally
  complete positive-operator-valued measures: A new computer study}}.
\newblock {\sl \bibinfo{journal}{Journal of Mathematical Physics}}
  \bibinfo{volume}{51}(\bibinfo{number}{4}), p. \bibinfo{pages}{042203},
  \doi{10.1006/jsco.1996.0125}.

\bibitemdeclare{article}{selinger2007dagger}
\bibitem{selinger2007dagger}
\bibinfo{author}{Peter \surnamestart Selinger\surnameend}
  (\bibinfo{year}{2007}): \emph{\bibinfo{title}{Dagger compact closed
  categories and completely positive maps}}.
\newblock {\sl \bibinfo{journal}{Electronic Notes in Theoretical Computer
  Science}} \bibinfo{volume}{170}, pp. \bibinfo{pages}{139--163},
  \doi{10.1016/j.entcs.2006.12.018}.

\bibitemdeclare{article}{spekkens2008negativity}
\bibitem{spekkens2008negativity}
\bibinfo{author}{Robert~W \surnamestart Spekkens\surnameend}
  (\bibinfo{year}{2008}): \emph{\bibinfo{title}{Negativity and contextuality
  are equivalent notions of nonclassicality}}.
\newblock {\sl \bibinfo{journal}{Physical review letters}}
  \bibinfo{volume}{101}(\bibinfo{number}{2}), p. \bibinfo{pages}{020401},
  \doi{10.1023/B:FOOP.0000019581.00318.a5}.

\bibitemdeclare{incollection}{spekkens2016quasi}
\bibitem{spekkens2016quasi}
\bibinfo{author}{Robert~W \surnamestart Spekkens\surnameend}
  (\bibinfo{year}{2016}): \emph{\bibinfo{title}{Quasi-quantization: classical
  statistical theories with an epistemic restriction}}.
\newblock In: {\sl \bibinfo{booktitle}{Quantum Theory: Informational
  Foundations and Foils}}, \bibinfo{publisher}{Springer}, pp.
  \bibinfo{pages}{83--135}, \doi{10.1007/978-94-017-7303-4_4}.

\bibitemdeclare{article}{veitch2012negative}
\bibitem{veitch2012negative}
\bibinfo{author}{Victor \surnamestart Veitch\surnameend},
  \bibinfo{author}{Christopher \surnamestart Ferrie\surnameend},
  \bibinfo{author}{David \surnamestart Gross\surnameend} \&
  \bibinfo{author}{Joseph \surnamestart Emerson\surnameend}
  (\bibinfo{year}{2012}): \emph{\bibinfo{title}{Negative quasi-probability as a
  resource for quantum computation}}.
\newblock {\sl \bibinfo{journal}{New Journal of Physics}}
  \bibinfo{volume}{14}(\bibinfo{number}{11}), p. \bibinfo{pages}{113011},
  \doi{10.1088/1367-2630/14/11/113011}.

\bibitemdeclare{article}{veitch2013efficient}
\bibitem{veitch2013efficient}
\bibinfo{author}{Victor \surnamestart Veitch\surnameend},
  \bibinfo{author}{Nathan \surnamestart Wiebe\surnameend},
  \bibinfo{author}{Christopher \surnamestart Ferrie\surnameend} \&
  \bibinfo{author}{Joseph \surnamestart Emerson\surnameend}
  (\bibinfo{year}{2013}): \emph{\bibinfo{title}{Efficient simulation scheme for
  a class of quantum optics experiments with non-negative Wigner
  representation}}.
\newblock {\sl \bibinfo{journal}{New Journal of Physics}}
  \bibinfo{volume}{15}(\bibinfo{number}{1}), p. \bibinfo{pages}{013037},
  \doi{10.1088/1367-2630/15/1/013037}.

\bibitemdeclare{article}{wigner1932quantum}
\bibitem{wigner1932quantum}
\bibinfo{author}{Eugene \surnamestart Wigner\surnameend}
  (\bibinfo{year}{1932}): \emph{\bibinfo{title}{On the quantum correction for
  thermodynamic equilibrium}}.
\newblock {\sl \bibinfo{journal}{Physical review}}
  \bibinfo{volume}{40}(\bibinfo{number}{5}), p. \bibinfo{pages}{749},
  \doi{10.1103/PhysRev.40.749}.

\bibitemdeclare{article}{zhu2016quasiprobability}
\bibitem{zhu2016quasiprobability}
\bibinfo{author}{Huangjun \surnamestart Zhu\surnameend} (\bibinfo{year}{2016}):
  \emph{\bibinfo{title}{Quasiprobability representations of quantum mechanics
  with minimal negativity}}.
\newblock {\sl \bibinfo{journal}{Physical Review Letters}}
  \bibinfo{volume}{117}(\bibinfo{number}{12}), p. \bibinfo{pages}{120404},
  \doi{10.1016/j.aop.2015.08.005}.

\end{thebibliography}

\appendix

\section{Proof of Theorem \ref{theor:nateq}} \label{app:nateq}
\begin{theorem*}
Let $\{E_i^{(n)}\}$ and $\{F_i^{(n)}\}$ be families of minimal IC-POVM's, and let the $T^1$ and $T^2$ be the respective families of matrices as defined in Theorem \ref{theor:embed}, and let $Q_i:\CPM \rightarrow \Qstoch_{T_i}$ be the respective functors. Then there exists a natural isomorphism $\eta: F_{T^1}\circ Q_1 \Rightarrow F_{T^2}\circ Q_2$.
\end{theorem*}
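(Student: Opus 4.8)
The plan is to exhibit the natural isomorphism explicitly. For each object $n$ define the matrix
$$\eta_n \;:=\; W_n\,(T^1_n)^{-1}, \qquad W_n(i\lvert j) \;=\; \tr\!\left(\tfrac{E_j^{(n)}}{\tr(E_j^{(n)})}\,F_i^{(n)}\right),$$
where $T^1_n$ is the transition matrix of $\{E_j^{(n)}\}$ (the $n^2\times n^2$ member of the family $T^1$). This $W_n$ is precisely the stochastic matrix that the construction of Section~\ref{sec:prelim} assigns to the identity channel $\mathrm{id}:M_n\to M_n$ when the input space is equipped with $\{E_j^{(n)}\}$ and the output space with $\{F_i^{(n)}\}$; concretely, expanding a state $\rho$ via $\rho=\sum_j ((T^1_n)^{-1}p)_j \tfrac{E_j^{(n)}}{\tr(E_j^{(n)})}$ with $p_i=\tr(\rho E_i^{(n)})$ gives $\tr(\rho F_i^{(n)})=(\eta_n p)_i$, so $\eta_n$ is exactly the change-of-coordinates matrix taking the $\{E^{(n)}\}$-probability vector of any state to its $\{F^{(n)}\}$-probability vector. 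I would first record that $\eta_n$ is an invertible quasi-stochastic matrix, hence an isomorphism $n^2\to n^2$ in $\Qstoch$: $W_n$ is stochastic since $\sum_i F_i^{(n)}=I_n$, $(T^1_n)^{-1}$ is quasi-stochastic, their product is quasi-stochastic, and both factors are invertible because both POVMs are bases of $M_n$ (equivalently $\rho\mapsto(\tr(\rho F_i^{(n)}))_i$ is injective on the Hermitian matrices). For $n=1$ one has $\eta_1=1$, since the unique POVM on $M_1$ is $\{1\}$.

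Next I would unwind the two functors on morphisms. Using Theorem~\ref{theor:embed} and the definition of the isomorphism $F_T$, for $\Phi:n\to m$ one gets $(F_{T^1}\circ Q_1)(\Phi) = Q_1(\Phi)(T^1_n)^{-1}$, which by the relation $q = rT^{-1}p$ of Section~\ref{sec:prelim} is exactly the quasi-stochastic matrix sending the $\{E^{(n)}\}$-probability vector of a state $\rho$ to the $\{E^{(m)}\}$-probability vector of $\Phi(\rho)$; likewise $(F_{T^2}\circ Q_2)(\Phi)$ is the matrix sending $\{F^{(n)}\}$-probabilities of $\rho$ to $\{F^{(m)}\}$-probabilities of $\Phi(\rho)$.

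With these identifications the naturality square becomes transparent. Both composites $\eta_m\circ (F_{T^1}\circ Q_1)(\Phi)$ and $(F_{T^2}\circ Q_2)(\Phi)\circ\eta_n$ are linear maps $\mathbb{R}^{n^2}\to\mathbb{R}^{m^2}$, and each of them sends the $\{E^{(n)}\}$-probability vector of an arbitrary state $\rho$ to the $\{F^{(m)}\}$-probability vector of $\Phi(\rho)$ — the left-hand side by first pushing $\rho$ through $\Phi$ in $\{E\}$-coordinates and then changing to $\{F\}$-coordinates, the right-hand side by changing coordinates first and then applying $\Phi$. To pass from agreement on these vectors to equality of matrices I would use that $\{\,p(\rho):\rho\text{ a state on }M_n\,\}$ spans $\mathbb{R}^{n^2}$: the map $\rho\mapsto p(\rho)$ is linear and injective on the $n^2$-dimensional real space of Hermitian matrices because $\{E_i^{(n)}\}$ is informationally complete, and the density operators span that space. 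Two linear maps agreeing on a spanning set coincide, so the square commutes for every $\Phi$; since each $\eta_n$ is invertible, $\eta=(\eta_n)_n$ is a natural isomorphism.

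The only substantive work is bookkeeping: matching the index conventions ($n$ versus $n^2$, and which transition matrix appears where) when writing $F_{T^i}\circ Q_i$ on morphisms, and the elementary observation that states span the Hermitian matrices, so that agreement of the two composites on probability vectors of states forces equality of the full matrices. One could instead verify the square by a direct triple-product matrix computation using $Q_i(\Psi\circ\Phi)=Q_i(\Psi)\ast Q_i(\Phi)$, but the coordinate-change reading above makes that calculation essentially automatic. I expect the identification of $(F_{T^i}\circ Q_i)(\Phi)$ with ``$\Phi$ written in the probability coordinates of the $i$-th POVM family'' to be the conceptual crux; after that the argument is routine.
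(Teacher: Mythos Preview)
Your proof is correct and defines exactly the same natural isomorphism as the paper: your $W_n$ is the paper's $S^{(n)}$, and $\eta_n = W_n(T^1_n)^{-1} = S^{(n)}(T^1_n)^{-1}$ coincides with the paper's component. The difference is only in how naturality is verified. The paper carries out an explicit matrix computation: it expands $Q_1(\Phi)$ in terms of $Q_2(\Phi)$ via basis-change matrices $\alpha^{(n)}$ and $\beta^{(m)}$, identifies these as $(T^2_n)^{-1}S^{(n)}$ and $T^1_m(S^{(m)})^{-1}$, and then rearranges to obtain the naturality square directly. Your argument is instead semantic: you interpret $(F_{T^i}\circ Q_i)(\Phi)$ as ``$\Phi$ in probability coordinates for the $i$th POVM family'' and $\eta_n$ as the coordinate change, so both legs of the square compute the same thing on probability vectors of states; equality of matrices then follows because such vectors span $\mathbb{R}^{n^2}$. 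Your route is shorter and more conceptually transparent, while the paper's computation is entirely self-contained and does not appeal to the spanning argument. Either way the substance is the same.
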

\begin{proof}
We need a family of transition matrices from the POVM $\{E_i^{(n)}\}$ to $\{F_i^{(n)}\}$. So first note that there is a matrix $\alpha^{(n)}$ such that
$$
\frac{E_j^{(n)}}{\tr(E_j^{(n)})} = \sum_k \alpha^{(n)}(k\lvert j) \frac{F_k^{(n)}}{\tr(F_k^{(n)})}.
$$
Then define the matrix $S^{(n)}$ by
$$
S^{(n)}(i\lvert j) = \tr(\frac{E_j^{(n)}}{\tr(E_j^{(n)})}F_i^{(n)}) = \sum_k \alpha^{(n)}(k\lvert j) \tr(\frac{F_k^{(n)}}{\tr(F_k^{(n)})} F_i^{(n)}) = \sum_k \alpha^{(n)}(k\lvert j) T^2_n(i\lvert k) 
$$
which means that $S^{(n)} = T^2_n \alpha^{(n)}$ or equivalently (using the fact that the POVM's are minimal so that their transition matrices are invertible) that $\alpha^{(n)} = \left(T^2_n\right)^{-1}S^{(n)}$.

Now given a CPTP map $\Phi: M_n\rightarrow M_m$ we can write
\begin{align*}
Q_1(\Phi)(j\lvert i) &= \tr(\Phi\left(\frac{E_i^{(n)}}{\tr(E_i^{(n)})}\right)E_j^{(m)}) = \tr(E_j^{(m)})\tr(\Phi\left(\frac{E_i^{(n)}}{\tr(E_i^{(n)})}\right)\frac{E_j^{(m)}}{\tr(E_j^{(m)})})\\
&= \tr(E_j^{(m)})\sum_{i^\prime,j^\prime} \alpha^{(n)}(i^\prime \lvert i)\alpha^{(m)}(j^\prime \lvert j)\frac{1}{\tr(F_{j^\prime}^{(m)})}\tr(\Phi\left(\frac{F_{i^\prime}^{(n)}}{\tr(F_{i^\prime}^{(n)})}\right)F_{j^\prime}^{(m)}) \\
&= \sum_{i^\prime, j^\prime} \alpha^{(n)}(i^\prime \lvert i)\alpha^{(m)}(j^\prime \lvert j)\frac{\tr(E_j^{(m)})}{\tr(F_{j^\prime}^{(m)})}Q_2(\Phi)(j^\prime \lvert i^\prime).
\end{align*}
This can be simplified by defining
$$
\beta^{(m)}(i\lvert j) := \alpha^{(m)}(j\lvert i) \frac{\tr(E_i^{(m)})}{\tr(F_j^{(m)})}.
$$
Note that the indexes of $\alpha$ and $\beta$ are reversed. This is intentional as this allows us to write
\begin{align*}
Q_1(\Phi)(j\lvert i) &= \sum_{j^\prime}\beta^{(m)}(j\lvert j^\prime) \sum_{i\prime} Q_2(\Phi)(j^\prime\lvert i^\prime)\alpha^{(n)}(i^\prime\lvert i) = \sum_{j^\prime}\beta^{(m)}(j\lvert j^\prime)\left(Q_2(\Phi)\alpha^{(n)}\right)(j^\prime\lvert i) \\
&= \left(\beta^{(m)}Q_2(\Phi)\alpha^{(n)}\right)(j\lvert i).
\end{align*}
What is this $\beta$? If we take $\Phi=id_n$ we get $Q_i(\Phi) = T^i_n$. By using $\alpha^{(n)} = \left(T^2_n\right)^{-1}S^{(n)}$ we then get 
$$
T^1_n = Q_1(\Phi) = \beta^{(n)}Q_2(\Phi)\alpha^{(n)} = \beta^{(n)} T^2_n \left(T^2_n\right)^{-1} S^{(n)} = \beta^{(n)}S^{(n)}.
$$
$S^{(n)}$ has an inverse since it acts as a basis transformation between the two POVM's. This allows us to write $\beta^{(n)} = \left(S^{(n)}\right)^{-1}T^1_n$ which means we get the following expression:
$$
Q_1(\Phi) = \left(S^{(m)}\right)^{-1}T^1_m Q_2(\Phi) \left(T^2_n\right)^{-1}S^{(n)}.
$$
Now by multiplying this equation with $\left(T^1_n\right)^{-1}$ on the right and using $(F_i\circ Q_i)(\Phi) = Q_i(\Phi)\left(T^i_n\right)^{-1}$ this translates to
$$
(F_1\circ Q_1)(\Phi) = \left(S^{(m)}\right)^{-1}T^1_m(F_2\circ Q_2)(\Phi) S^{(n)}\left(T^1_n\right)^{-1}.
$$
Define $\eta_n := S^{(n)}\left(T^1_n\right)^{-1}$. The equation can now be recast as
$$
\eta_m (F_1\circ Q_1)(\Phi) = (F_2\circ Q_2)(\Phi) \eta_n
$$
which means the collection of $\eta$'s forms a natural transformation. For all $n$, $S^{(n)}$ and $T^{(n)}$ are both invertible, so that $\eta^{(n)}$ is as well. Hence $\eta$ is a natural isomorphism.
\end{proof}

\section{Proof of Theorem \ref{theor:monoid}} \label{app:monoid}
\begin{theorem*}
    The functor $Q: \CPM\rightarrow \Qstoch_T$ as defined in Theorem \ref{theor:embed} for a family of minimal IC-POVM's is strong monoidal with coherence isomorphisms $\alpha_{n_1,n_2}=S_{n_1,n_2}$ where
    $$S_{n_1,n_2}(j\lvert i_1i_2) = \tr(\frac{E_{i_1}^{n_1}}{\tr(E_{i_1}^{n_1})}\otimes \frac{E_{i_2}^{n_2}}{\tr(E_{i_2}^{n_2})} E_j^{n_1n_2})
    $$
\end{theorem*}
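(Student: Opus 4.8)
The plan is to verify the three conditions in the definition of a strong monoidal functor for the candidate $\alpha_{n_1,n_2} := S_{n_1,n_2}$: that each $S_{n_1,n_2}$ is an isomorphism in $\Qstoch_T$, that the family $(S_{n_1,n_2})$ is natural, and that it satisfies the associativity coherence condition (the unit conditions being trivial, see below). Throughout I write $\hat E_i^{(n)} := E_i^{(n)}/\tr(E_i^{(n)})$, and I let $T_n$ denote the $n^2\times n^2$ transition matrix of the POVM on $M_n$, which by construction is the identity morphism of the object $n^2$ in $\Qstoch_T$, with composition $s*r = s\,T_m^{-1}\,r$ when the intermediate object is $m^2$. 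The single reusable fact I lean on: for a CPTP map $\Phi:M_a\to M_b$ and any $C=\sum_i v_i\,\hat E_i^{(a)}\in M_a$, the vector $Q(\Phi)v$ is the distribution of $\Phi(C)$ over $\{E_j^{(b)}\}$ (immediate from the defining formula of $Q$ and linearity), and dually $T_a^{-1}$ turns such a distribution back into the coefficient vector $v$, exactly as in the Preliminaries.

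First I would check that $S_{n_1,n_2}$ is a morphism of $\Qstoch_T$, i.e.\ quasi-stochastic: summing column $(i_1,i_2)$ over $j$ and using $\sum_j E_j^{(n_1n_2)}=I$ together with $\tr(X\otimes Y)=\tr(X)\tr(Y)$ gives $\sum_j S_{n_1,n_2}(j|i_1i_2)=\tr(\hat E_{i_1}^{(n_1)})\tr(\hat E_{i_2}^{(n_2)})=1$. Invertibility — the point that makes the functor \emph{strong} rather than merely lax — is where minimality is used: $\{E_{i_1}^{(n_1)}\otimes E_{i_2}^{(n_2)}\}$ is a basis of $M_{n_1}\otimes M_{n_2}=M_{n_1n_2}$ (a tensor product of bases), as is $\{E_j^{(n_1n_2)}\}$; writing $\hat E_{i_1}^{(n_1)}\otimes\hat E_{i_2}^{(n_2)}=\sum_k c(k|i_1i_2)\,\hat E_k^{(n_1n_2)}$ with $c$ the invertible change-of-basis matrix, one reads off $S_{n_1,n_2}=T_{n_1n_2}\,c$, a product of invertibles. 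It is convenient to record the reformulation $\tilde S_{n_1,n_2}:=T_{n_1n_2}^{-1}S_{n_1,n_2}=c$, the map sending (coefficient vector of $A$) $\otimes$ (coefficient vector of $B$) to the coefficient vector of $A\otimes B$; this makes the remaining steps transparent.

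For naturality, given CPTP maps $f_i:M_{a_i}\to M_{b_i}$ I would expand both sides of $S_{b_1,b_2}*(Q(f_1)\otimes^\prime Q(f_2))=Q(f_1\otimes f_2)*S_{a_1,a_2}$ using the definitions of $*$ and of $\otimes^\prime$ on $\Qstoch_T$. On the left the $T_{b_1b_2}^{-1}$ supplied by $*$ cancels the $T_{b_1b_2}$ in $A\otimes^\prime B=T_{b_1b_2}(T_{b_1}^{-1}\otimes T_{b_2}^{-1})(A\otimes B)$, leaving $S_{b_1,b_2}(T_{b_1}^{-1}\otimes T_{b_2}^{-1})(Q(f_1)\otimes Q(f_2))$; by the reusable fact, column $(i_1,i_2)$ of $(T_{b_1}^{-1}\otimes T_{b_2}^{-1})(Q(f_1)\otimes Q(f_2))$ is the coefficient vector of $f_1(\hat E_{i_1}^{(a_1)})$ tensored with that of $f_2(\hat E_{i_2}^{(a_2)})$, and applying $S_{b_1,b_2}$ contracts it against the $E_j^{(b_1b_2)}$ to give $\tr\!\big((f_1(\hat E_{i_1}^{(a_1)})\otimes f_2(\hat E_{i_2}^{(a_2)}))E_j^{(b_1b_2)}\big)$. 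On the right $Q(f_1\otimes f_2)\,T_{a_1a_2}^{-1}\,S_{a_1,a_2}$ has column $(i_1,i_2)$ equal to $Q(f_1\otimes f_2)$ applied to the coefficient vector of $\hat E_{i_1}^{(a_1)}\otimes\hat E_{i_2}^{(a_2)}$, which by the reusable fact again is the distribution of $(f_1\otimes f_2)(\hat E_{i_1}^{(a_1)}\otimes\hat E_{i_2}^{(a_2)})=f_1(\hat E_{i_1}^{(a_1)})\otimes f_2(\hat E_{i_2}^{(a_2)})$ over $\{E_j^{(b_1b_2)}\}$ — the same vector. So naturality holds, using only linearity and the identity $(f_1\otimes f_2)(A\otimes B)=f_1(A)\otimes f_2(B)$.

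Finally, for the coherence condition $S_{n_1n_2,n_3}*(S_{n_1,n_2}\otimes^\prime \mathrm{id}_{Q(n_3)})=S_{n_1,n_2n_3}*(\mathrm{id}_{Q(n_1)}\otimes^\prime S_{n_2,n_3})$ (recall $\mathrm{id}_{Q(n)}=T_n$) I would once more unwind $*$ and $\otimes^\prime$ so that the stray transition matrices cancel — each factor $T_n^{-1}T_n$ coming from the identity slots — reducing both sides to a statement about the maps $\tilde S$ of the second step: after cancellation, one side is $\big(\,$coeff.\ of $A\big)\otimes\big(\,$coeff.\ of $B\big)\otimes\big(\,$coeff.\ of $C\big)\ \longmapsto\ $ distribution of $(A\otimes B)\otimes C$ over $\{E_k^{(n_1n_2n_3)}\}$, and the other the same with $A\otimes(B\otimes C)$. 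Since product-coefficient vectors span the whole domain and $(A\otimes B)\otimes C=A\otimes(B\otimes C)$, the two agree; the unit conditions are immediate because $M_1$ carries the unique minimal IC-POVM $\{1\}$, which forces $S_{1,n}=S_{n,1}=T_n=\mathrm{id}_{n^2}$. The only genuine obstacle anywhere is the bookkeeping of the $T$-insertions that $*$ and $\otimes^\prime$ introduce in the last two steps; once those cancellations are organised, everything collapses to linearity, the local action of $f_1\otimes f_2$, and — for the invertibility of $S_{n_1,n_2}$ — minimality of the POVM's.
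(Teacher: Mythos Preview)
Your proof is correct and follows essentially the same route as the paper: both identify $S_{n_1,n_2}$ with the basis-change between the tensor-product POVM $\{\hat E_{i_1}^{(n_1)}\otimes\hat E_{i_2}^{(n_2)}\}$ and the composite POVM $\{\hat E_j^{(n_1n_2)}\}$ (your matrix $c=T_{n_1n_2}^{-1}S_{n_1,n_2}$ is exactly the inverse of the paper's $\beta=S_{n_1,n_2}^{-1}T_{n_1n_2}$), and both collapse the coherence condition to the symmetric ``triple'' $S_{n_1,n_2,n_3}(j\lvert i_1,i_2,i_3)=\tr\bigl((\hat E_{i_1}^{(n_1)}\otimes\hat E_{i_2}^{(n_2)}\otimes\hat E_{i_3}^{(n_3)})E_j^{(n_1n_2n_3)}\bigr)$ after the same cancellation of $T$-factors. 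The only differences are organisational: the paper first establishes the naturality square on states $Q(\rho_1\otimes\rho_2)=S_{n_1,n_2}*(Q(\rho_1)\otimes'Q(\rho_2))$ and then bootstraps to arbitrary CPTP maps via functoriality of $Q$, whereas you verify the square directly for maps; you also make explicit the invertibility of $S_{n_1,n_2}$ (and hence where minimality enters) and the unit conditions, both of which the paper leaves implicit.
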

\begin{proof}
    Let $\rho_i$ be states in $M_{n_i}$, then we have $\rho_i = \sum_j(T_{n_i}^{-1}Q(\rho_i))_j \frac{E_j^{n_i}}{\tr(E_j^{n_i})}$ which means we can write
    \begin{align*}
        Q(\rho_1\otimes \rho_2)_j &= \tr((\rho_1\otimes\rho_2)E_j^{n_1n_2}) = \sum_{i_1,i_2} (T_{n_1}^{-1}Q(\rho_1))_{i_1}(T_{n_2}^{-1}Q(\rho_2))_{i_2}\tr(\frac{E_{i_1}^{n_1}}{\tr(E_{i_1}^{n_1})}\otimes \frac{E_{i_2}^{n_2}}{\tr(E_{i_2}^{n_2})} E_j^{n_1n_2}) \\
        &= \sum_{i_1,i_2} S_{n_1,n_2}(j\lvert i_1, i_2)(T_{n_1}^{-1}Q(\rho_1))_{i_1}(T_{n_2}^{-1}Q(\rho_2))_{i_2}.
    \end{align*}
    This can now be written as $Q(\rho_1\otimes \rho_2) = S_{n_1,n_2}(T_{n_1}^{-1}\otimes T_{n_2}^{-1})(Q(\rho_1)\otimes Q(\rho_2))$ and substituting the definition of the tensor product of $\Qstoch_T$ $\otimes^\prime$ it becomes
    $$
        Q(\rho_1\otimes \rho_2) = S_{n_1,n_2}T_{n_1n_2}^{-1} (Q(\rho_1)\otimes^\prime Q(\rho_2)) = S_{n_1,n_2}*(Q(\rho_1)\otimes^\prime Q(\rho_2)).
    $$
    Now fix CPTP maps $\Phi_i: M_{n_i}\rightarrow M_{m_i}$ and write
    \begin{align*}
        Q(\Phi_1\otimes \Phi_2)*Q(\rho_1\otimes \rho_2) &= Q(\Phi_1(\rho_1)\otimes \Phi_2(\rho_2)) = S_{m_1,m_2}*(Q(\Phi_1(\rho_1))\otimes^\prime Q(\Phi_2(\rho_2))) \\
        &= S_{m_1,m_2}*(Q(\Phi_1)\otimes^\prime Q(\Phi_2))*(Q(\rho_1)\otimes^\prime Q(\rho_2)) \\
        &= S_{m_1,m_2}*(Q(\Phi_1)\otimes^\prime Q(\Phi_2))S_{n_1,n_2}^{-1}T_{n_1n_2}*Q(\rho_1 \otimes \rho_2).
    \end{align*}
    Because this has to hold for all states $\rho_i$ we can drop that term and we get the equality $Q(\Phi_1\otimes\Phi_2) = S_{m_1,m_2}*(Q(\Phi_1)\otimes^\prime Q(\Phi_2))S_{n_1,n_2}^{-1}T_{n_1n_2}$. Now by bringing some terms to the other side and using the definition of $*$ again we get the desired naturality equation:
    $$
        S_{m_1,m_2}*(Q(\Phi_1)\otimes^\prime Q(\Phi_2)) = Q(\Phi_1\otimes\Phi_2)*S_{n_1,n_2}.
    $$
    We still need to check that the coherence condition holds:
    \begin{align*}
        &S_{n_1n_2,n_3}*(S_{n_1,n_2}\otimes^\prime id_{n_3}) = S_{n_1,n_2n_3}*(id_{n_1}\otimes^\prime S_{n_2,n_3}) \iff \\
        &S_{n_1n_2,n_3}T_{n_1n_2n_3}^{-1}T_{n_1n_2n_3}(T_{n_1n_2}^{-1}\otimes T_{n_3}^{-1})(S_{n_1,n_2}\otimes T_{n_3}) = S_{n_1,n_2n_3}T_{n_1n_2n_3}^{-1}T_{n_1n_2n_3}(T_{n_1}^{-1}\otimes T_{n_2n_3}^{-1})(T_{n_1}\otimes S_{n_2,n_3}) \iff \\
        &S_{n_1n_2,n_3}(T_{n_1n_2}^{-1}S_{n_1,n_2}\otimes I_{n_3}) = S_{n_1,n_2n_3}(I_{n_1}\otimes T_{n_2n_3}^{-1}S_{n_2,n_3}).
    \end{align*}
    To prove this equality we need to write out $S_{n_1n_2,n_3}$ and to do that we first need the following: define $\beta$ such that
    $$
        \frac{E_{i_1}^{n_1n_2}}{\tr(E_{i_1}^{n_1n_2})} = \sum_{k_1,k_2} \beta(k_1,k_2\lvert i_1) \frac{E_{k_1}^{n_1}}{\tr(E_{k_1}^{n_1})}\otimes \frac{E_{k_2}^{n_2}}{\tr(E_{k_2}^{n_2})}
    $$
    which then gives
    \begin{align*}
        T_{n_1n_2}(l\lvert i_1) &= \tr(\frac{E_{i_1}^{n_1n_2}}{\tr(E_{i_1}^{n_1n_2})} E_l^{n_1n_2}) = \sum_{k_1,k_2} \beta(k_1,k_2\lvert i_1) \tr(\frac{E_{k_1}^{n_1}}{\tr(E_{k_1}^{n_1})}\otimes \frac{E_{k_2}^{n_2}}{\tr(E_{k_2}^{n_2})} E_l^{n_1n_2}) \\
        &= \sum_{k_1k_2}\beta(k_1,k_2\lvert i_1)S_{n_1,n_2}(l\lvert k_1,k_2) = (S_{n_1,n_2}\beta)(l\lvert i_1),
    \end{align*}
    so that $\beta = S_{n_1,n_2}^{-1}T_{n_1n_2}$.
    Now we can write
    \begin{align*}
        S_{n_1n_2,n_3}(j\lvert i_1, i_2) &= \tr(\frac{E_{i_1}^{n_1n_2}}{\tr(E_{i_1}^{n_1n_2})}\otimes \frac{E_{i_2}^{n_3}}{\tr(E_{i_2}^{n_3})} E_j^{n_1n_2n_3}) \\
        &= \sum_{k_1,k_2}\beta(k_1,k_2\lvert i_1) \tr(\frac{E_{k_1}^{n_1}}{\tr(E_{k_1}^{n_1})}\otimes \frac{E_{k_2}^{n_2}}{\tr(E_{k_2}^{n_2})}\otimes\frac{E_{i_2}^{n_3}}{\tr(E_{i_2}^{n_3})} E_j^{n_1n_2n_3}).
    \end{align*}
    Defining the quantity in the trace to be $S_{n_1,n_2,n_3}(j\lvert k_1,k_2,i_2)$ (note the comma's) this becomes
    $$
        S_{n_1n_2,n_3}(j\lvert i_1,i_2) = \sum_{k_1,k_2} S_{n_1,n_2,n_3}(j\lvert k_1,k_2,i_2)\beta(k_1,k_2\lvert i_1) = \sum_{k_1,k_2,k_3} S_{n_1,n_2,n_3}(j\lvert k_1,k_2,k_3)\beta(k_1,k_2\lvert i_1)\delta_{k_3,i_2}
    $$
    so that we get the equality $S_{n_1n_2,n_3} = S_{n_1,n_2,n_3}(\beta\otimes I_{n_3}) = S_{n_1,n_2,n_3}(S_{n_1,n_2}^{-1}T_{n_1n_2}\otimes I_{n_3})$.

    Filling this in the left-hand side of the coherence equation:
     $$S_{n_1n_2,n_3} (T_{n_1n_2}^{-1}S_{n_1,n_2}\otimes I_{n_3}) = S_{n_1,n_2,n_3} (S_{n_1,n_2}^{-1}T_{n_1n_2}\otimes I_{n_3})(T_{n_1n_2}^{-1}S_{n_1,n_2}\otimes I_{n_3}) = S_{n_1,n_2,n_3}.$$
    By doing a similar rewriting exercise for $S_{n_1,n_2n_3}$ we get the same expression on the right-hand side, which proves the coherence equation.
\end{proof}

\section{Proof of Theorem \ref{theor:trivialrep}}\label{app:trivialrep}

We first prove a short lemma about subspaces of matrices.
\begin{lemma*}
Let $L\subset M_n$ be a subspace containing the identity and at least one Hermitian matrix with at least two distinct eigenvalues, then $V=$ span$\left(\bigcup_{U\in U(n)} ULU^\dagger\right) = M_n$, where the union is taken over all unitary matrices.
\end{lemma*}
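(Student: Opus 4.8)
The plan is to use one structural fact only: $V$ is closed under conjugation by every unitary. Indeed if $A=\sum_k c_k U_k L_k U_k^\dagger$ with $L_k\in L$, then $WAW^\dagger=\sum_k c_k (WU_k)L_k (WU_k)^\dagger\in V$ for any unitary $W$. Since $V$ is a finite-dimensional (hence closed) subspace, differentiating $t\mapsto e^{tX}Ae^{-tX}$ at $t=0$ for skew-Hermitian $X$ shows $V$ is also stable under the commutator map $A\mapsto [X,A]$ for every skew-Hermitian $X$; writing an arbitrary $Y\in M_n$ as $Y_1+iY_2$ with $Y_1,Y_2$ Hermitian and using that $V$ is a complex subspace gives $[Y,A]=[Y_1,A]+[iY_2,A]\in V$, so $V$ is invariant under $\operatorname{ad}_Y$ for all $Y\in M_n$. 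That invariance together with conjugation-invariance is what does all the work.

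Next I would diagonalise the hypothesised non-scalar Hermitian element: there is a unitary $U$ with $UHU^\dagger=D=\operatorname{diag}(d_1,\dots,d_n)$, the $d_i$ real and not all equal, and $D\in V$ by conjugation-invariance. Pick $p\neq q$ with $d_p\neq d_q$. A one-line computation gives $[E_{pq},D]=(d_q-d_p)E_{pq}$ for the matrix unit $E_{pq}$, so $E_{pq}\in V$; conjugating by permutation unitaries then puts $E_{ij}\in V$ for every ordered pair $i\neq j$, and finally $[E_{ij},E_{ji}]=E_{ii}-E_{jj}\in V$. The off-diagonal matrix units together with the differences $E_{ii}-E_{jj}$ span the traceless matrices, so $V$ contains every traceless matrix; since $I_n\in L\subseteq V$ as well, $V=M_n$.

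The main point requiring care — more a bookkeeping hazard than a genuine obstacle — is the step from "conjugation-invariant" to "$\operatorname{ad}$-invariant" and applying it to the non-(skew-)Hermitian element $E_{pq}$: one must observe that $E_{pq}=\tfrac12(E_{pq}+E_{qp})+\tfrac12(E_{pq}-E_{qp})$ is a complex-linear combination of a Hermitian and a skew-Hermitian matrix and invoke that $V$ is a complex subspace, and one should note this argument only makes sense for $n\ge 2$ (for $n=1$ the hypothesis is vacuous). If one prefers to avoid the differentiation step, conjugating $D$ by the one-parameter family of rotations in the $(p,q)$-plane and taking two linear combinations extracts $E_{pq}$ just as well, but the commutator route is shorter. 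A high-level rephrasing of the whole argument is that the conjugation action of $U(n)$ on $M_n$ decomposes as $\mathbb{C}I_n\oplus\mathfrak{sl}_n$ with $\mathfrak{sl}_n$ irreducible, so the only invariant subspaces are $0$, $\mathbb{C}I_n$, $\mathfrak{sl}_n$, and $M_n$; containing $I_n$ excludes the first and third, containing a non-scalar matrix excludes the second — but verifying that irreducibility is exactly the matrix-unit computation above, so I would just present the explicit version.
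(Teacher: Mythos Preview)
Your proof is correct, but the route is genuinely different from the paper's. The paper never leaves the diagonal: it observes that since $V$ is conjugation-closed and Hermitian matrices span $M_n$ and are unitarily diagonalisable, it suffices to show $V$ contains every diagonal matrix. It diagonalises the non-scalar Hermitian element to $D_1=\operatorname{diag}(\lambda_1,\dots,\lambda_n)$ with $\lambda_1\neq\lambda_2$, uses $I_n$ to normalise to $D_2=\operatorname{diag}(0,1,*,\dots,*)$, subtracts the conjugate of $D_2$ by the $(1,2)$-swap permutation to obtain $\operatorname{diag}(-1,1,0,\dots,0)$, and then notes that permutation conjugates of this span the traceless diagonals; together with $I_n$ this gives all diagonals and hence $M_n$. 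Your argument instead passes through the Lie-algebraic structure (differentiating conjugation to get $\operatorname{ad}$-invariance) and extracts the off-diagonal matrix units directly via $[E_{pq},D]$. The paper's version is more elementary---no limits, no derivatives, just finite linear combinations and permutations---while yours is more conceptual, makes the link to the irreducibility of $\mathfrak{sl}_n$ explicit, and arguably yields cleaner bookkeeping since commutators with matrix units are one-line computations rather than tracking several diagonal entries through a normalisation.
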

\begin{proof}
    We need to show that $V$ contains all matrices. Since it contains all unitary conjugations of any matrix, it suffices to show that it contains all diagonal matrices. Let $E$ be the matrix in $L$ with at least two distinct eigenvalues. We can diagonalise $E=UD_1U^\dagger$, so $D_1$ is in $V$. Write $D_1 = $ diag$(\lambda_1,\ldots,\lambda_n)$ taking $\lambda_1\neq \lambda_2$. Since $I_n$ is in $V$ we also have $D_2 = 1/(\lambda_2-\lambda_1)(D_1-\lambda_1 I_n)$ in $V$. We see that $D_2 = $ diag$(0,1,\lambda_3^\prime,\ldots,\lambda_n^\prime)$. Now we can apply the unitary conjugation to $D_2$ that interchanges the first and second coordinate and subtract it from $D_1$ giving $D_3 = D_2 - PD_2P = $ diag$(-1,1,0,\ldots,0)$. By considering conjugation with permutation matrices we can get the $-1$ and $1$ at arbitrary spots along the diagonal. The linear span of these operators is the set of diagonal matrices with zero trace. Because we also have the identity we can then create arbitrary diagonal matrices.
\end{proof}

\begin{theorem*}
    Let $F: \CPM \rightarrow \Qstoch$ be a quasi-stochastic representation of $\CPM$. This representation is either faithful or trivial.
\end{theorem*}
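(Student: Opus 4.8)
The plan is to fuse the preliminary lemma on unitary‑invariant subspaces of $M_n$ with functoriality to get a dichotomy separately in each dimension, and then to glue the dimensions together using measure‑and‑prepare channels. Write $\{E_i^{(n)}\}$ for the quasi‑POVM associated to $F$ and set $L_n = \mathrm{span}\{E_i^{(n)}\}\subseteq M_n$; note $I_n\in L_n$ since $\sum_i E_i^{(n)} = I_n$. The first step is to show that $L_n$ is closed under conjugation by every unitary $U\in U(n)$. Apply $F$ to the unital channel $\Phi_U(\rho)=U\rho U^\dagger$ together with a preparation morphism $\tilde\rho$: the equality $F(\Phi_U\circ\tilde\rho)=F(\Phi_U)F(\tilde\rho)$ reads componentwise $\tr(\rho\, U^\dagger E_i^{(n)}U) = \tr\!\big(\rho\,\sum_j F(\Phi_U)_{ij}E_j^{(n)}\big)$ for every state $\rho$. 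Since the density operators span $M_n$, this forces $U^\dagger E_i^{(n)}U = \sum_j F(\Phi_U)_{ij}E_j^{(n)}\in L_n$, so $L_n$ is unitary‑conjugation invariant.

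Next I invoke the preliminary lemma with this $L_n$ (which contains $I_n$). Either $L_n$ contains a Hermitian matrix with at least two distinct eigenvalues, in which case, using that $\bigcup_{U}UL_nU^\dagger = L_n$ by invariance, the lemma gives $L_n = M_n$, i.e. the quasi‑POVM is informationally complete at dimension $n$; or every Hermitian element of $L_n$ is a scalar multiple of $I_n$, and since each $E_i^{(n)}$ is Hermitian we get $E_i^{(n)} = c_i I_n$, whence $F(\rho)_i = \tr(\rho E_i^{(n)}) = c_i$ for every state $\rho\in M_n$, so $F$ is constant on the states of $M_n$. This is the per‑dimension dichotomy: for each $n$, $F$ is either informationally complete at $n$ or constant on the states of $M_n$.

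To glue, I would show that if $F$ is constant on the states of some $M_{n_0}$ with $n_0\ge 2$, then it is constant on the states of every $M_m$ (trivially so for $m=1$), hence a trivial representation. Given $\sigma_0,\sigma_1\in M_m$, choose a rank‑one projection $P\in M_{n_0}$ with $P\neq I_{n_0}$ (possible as $n_0\ge2$) and a rank‑one projection $\rho_1$ orthogonal to $P$, and consider the measure‑and‑prepare (hence CPTP) channel $\Phi:M_{n_0}\to M_m$, $\Phi(\rho)=\tr(P\rho)\,\sigma_0+\tr((I_{n_0}-P)\rho)\,\sigma_1$, which satisfies $\Phi(P)=\sigma_0$ and $\Phi(\rho_1)=\sigma_1$. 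Then $F(\sigma_0)=F(\Phi\circ\tilde{P})=F(\Phi)F(\tilde{P})$ and $F(\sigma_1)=F(\Phi\circ\tilde{\rho}_1)=F(\Phi)F(\tilde{\rho}_1)$, and $F(\tilde{P})=F(\tilde{\rho}_1)$ by constancy at $n_0$, so $F(\sigma_0)=F(\sigma_1)$. Consequently: if $F$ is constant on the states of any $M_n$ with $n\ge2$ it is trivial; otherwise, by the per‑dimension dichotomy, the associated quasi‑POVM is informationally complete for every $n$ (trivially for $n=1$), so $F$ is faithful by the earlier characterisation of faithful representations. These two alternatives are exhaustive and mutually exclusive, which is exactly the claim.

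The cheap parts are the unitary‑invariance of $L_n$ and the appeal to the preliminary lemma; the step I expect to be the real obstacle is the gluing, since the per‑dimension statement on its own permits $F$ to be informationally complete in some dimensions and degenerate in others, and ruling this out requires exhibiting enough morphisms between distinct objects — the measure‑and‑prepare channels above — to transport (non)triviality from one object to all of them.
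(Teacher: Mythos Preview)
Your argument is correct and follows the same overall architecture as the paper's proof: establish a per-dimension dichotomy via unitary invariance and the preliminary lemma, then propagate triviality across dimensions using CPTP maps between distinct objects. The execution differs in two places. For the per-dimension step, the paper works dually: it takes a nonzero $\rho_2\in V^\perp$ and argues that $V^\perp$ must be closed under all CPTP maps, hence $\rho_2$ is orthogonal to $W=\mathrm{span}\bigl(\bigcup_U UVU^\dagger\bigr)$, which the lemma forces to be all of $M_n$; you instead show directly that $L_n$ itself is unitary-conjugation invariant by reading off $U^\dagger E_i U=\sum_j F(\Phi_U)_{ij}E_j$ from functoriality, which is a cleaner route to the same conclusion. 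For the gluing, the paper uses partial isometries $M_n\to M_m$ to push triviality up to every $m\ge n$ and then a separate CPTP surjection $M_n\to M_2$ to bootstrap back down; your single measure-and-prepare channel $M_{n_0}\to M_m$ handles all target dimensions uniformly in one stroke, which is more economical. Neither difference changes the substance of the proof.
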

\begin{proof}
    Let $\{E_i^{(n)}\}$ denote the $n$th quasi-POVM associated to $F$.
    Let $V = $ span$\{E_i^{(n)}\}$ and let $V^\perp = \{A \in M_n ; \langle A, E_i^{(n)}\rangle_{HS} = 0 ~\forall i\}$. We note that $M_n = V \oplus V^\perp$. $F$ is faithful on $n$ if and only if $V^\perp = \{0\}$, since otherwise we can find a state $\rho = \rho_1 + \rho_2$ where $\rho_1\in V$, $\rho_2\in V^\perp$, $\rho_2\neq 0$ such that $F(\rho) = F(\rho_1)$.

    Because of functoriality we must have $F(\Phi(\rho)) = F(\Phi)F(\rho) = F(\Phi)F(\rho_1) = F(\Phi(\rho_1))$ for all CPTP maps $\Phi$. If we can find a $\Phi$ such that $\Phi(\rho_2)\not\in V^\perp$ then this leads to a contradiction because we would have $F(\Phi(\rho))\neq F(\Phi(\rho_1))$. Therefore for the POVM to fit in a valid representation we must have $V^\perp$ closed under application of an arbitrary CPTP map. In particular, it has to be closed under unitary conjugation: when $\tr(\rho_2 E_i^{(n)}) = 0$ for all $i$ we must also have $\tr(U\rho_2 U^\dagger E_i^{(n)}) = \tr(\rho_2 U^\dagger E_i^{(n)} U) = 0$ for all unitaries $U$. This means we should have $\tr(\rho_2 A) = 0$ for all $A\in W = $ span$\left(\bigcup_{U\in U(n)} UV U^\dagger\right)$.

    Now we distinguish two cases. Either all the $E_i^{(n)}$ are multiples of the identity in which case $F(\rho)_i = \tr(\rho E_i^{(n)}) = \tr(\rho \alpha_i I_n) = \alpha_i$ where $E_i^{(n)} = \alpha_i I_n$, so that the representation is trivial, or there is a $E_i^{(n)}$ that isn't a multiple of the identity in which case it has at least two distinct eigenvalues. In the latter case the space $W$ satisfies the conditions for the previous lemma which gives $W=M_n$. But then $\tr(\rho_2 A) = 0$ for all $A\in M_n$ so that necessarily $\rho_2 = 0$, which shows that $V^\perp = \{0\}$. The representation is then indeed faithful for $n$.

    Let us now suppose that $F$ isn't faithful. Then there is an $n$ such that two states in $M_n$ are mapped to the same distribution. We then know that all states are mapped to the same distribution for this $n$ by the argument above. Let $m\geq n$. Pick two orthogonal pure states $\ket{v}\bra{v}, \ket{w}\bra{w} \in M_m$ and let $\rho$ and $\sigma$ be orthogonal pure states in $M_n$, then there exists a partial isometry $\Phi$ such that $\Phi(\rho) = \ket{v}\bra{v}$ and $\Phi(\sigma) = \ket{w}\bra{w}$. By functoriality we get $F(\ket{v}\bra{v}) = F(\Phi(\rho)) = F(\Phi)F(\rho)=F(\Phi)F(\sigma) = F(\Phi(\sigma)) = F(\ket{w}\bra{w})$. Since $v$ and $w$ were arbitrary, all pure states must be mapped to the same distribution and by convexity this holds for all states. This means that the representation is trivial for all $m\geq n$. If we now consider a CPTP surjection $\Phi: M_n\rightarrow M_2$ we can use the same argument to show that the representation is trivial for $m=2$ which shows that indeed the entire representation is trivial.
\end{proof}

\end{document}